\newcommand{\cmark}{\ding{51}} 
\newcommand{\xmark}{\ding{55}} 
\newtheorem{definition}{Definition}
\newtheorem{theorem}{Theorem}
\begin{document}

\title{When Crowdsensing Meets Federated Learning: Privacy-Preserving Mobile Crowdsensing System}

\author{Bowen~Zhao,
	Ximeng~Liu*,~\IEEEmembership{Member,~IEEE,}
	and~Weineng~Chen,~\IEEEmembership{Senior Member,~IEEE,}
	\IEEEcompsocitemizethanks{\IEEEcompsocthanksitem B. Zhao is with the School of Computing and Information Systems, Singapore Management University, Singapore, Singapore. E-mail: zhaobw29@163.com\protect
		
	\IEEEcompsocthanksitem X. Liu is with the College of Mathematics and Computer Science, Fuzhou University, Fujian, China, and Cyberspace Security Research Center, Peng Cheng Laboratory, Shenzhen. E-mail: snbnix@gmail.com
	\IEEEcompsocthanksitem W.-N. Chen are with the School of Computer Science and Engineering, South China University of Technology, Guangzhou, China. E-mail: cwnraul634@aliyun.com
	\IEEEcompsocthanksitem Corresponding authors: Ximeng Liu}
	\thanks{Manuscript received $\times\times$, $\times\times\times\times$; revised $\times\times$, $\times\times$, $\times\times\times$.}}

\markboth{arXive,~Vol.~$\times\times$, No.~$\times\times$, $\times\times\times\times$}
{Shell \MakeLowercase{\textit{et al.}}: Bare Demo of IEEEtran.cls for Computer Society Journals}

\IEEEtitleabstractindextext{%
\begin{abstract}
	Mobile crowdsensing (MCS) is an emerging sensing data collection pattern with scalability, low deployment cost, and distributed characteristics. Traditional MCS systems suffer from privacy concerns and fair reward distribution. Moreover, existing privacy-preserving MCS solutions usually focus on the privacy protection of data collection rather than that of data processing. To tackle faced problems of MCS, in this paper, we integrate federated learning (FL) into MCS and propose a privacy-preserving MCS system, called \textsc{CrowdFL}. Specifically, in order to protect privacy, participants locally process sensing data via federated learning and only upload encrypted training models. Particularly, a privacy-preserving federated averaging algorithm is proposed to average encrypted training models. To reduce computation and communication overhead of restraining dropped participants, discard and retransmission strategies are designed. Besides, a privacy-preserving posted pricing incentive mechanism is designed, which tries to break the dilemma of privacy protection and data evaluation. Theoretical analysis and experimental evaluation on a practical MCS application demonstrate the proposed \textsc{CrowdFL} can effectively protect participants privacy and is feasible and efficient.
\end{abstract}

\begin{IEEEkeywords}
	Crowdsensing, federated learning, privacy protection, incentive, federated averaging.
\end{IEEEkeywords}}

\maketitle

\IEEEdisplaynontitleabstractindextext

\IEEEpeerreviewmaketitle

\IEEEraisesectionheading{\section{Introduction}\label{sec:introduction}}

\IEEEPARstart{M}{obile} crowdsensing (MCS) integrates the power of both wireless network communication and crowd intelligence to reduce the deployment cost and improve the flexibility of wireless sensor network (WSN). MCS has extensive applications in environmental sensing, intelligent transport, indoor localization, behavior sensing, etc. Furthermore, MCS is considered a powerful technology in smart cities\cite{sucasas2020signature}, which can improve the ability of sensing and provide sufficient data. Compared to traditional WSN, MCS essentially outsources data collection tasks to participants.

However, as sensing data is collected by humans (i.e., participants), sensing data involves participants' private information, such as location, voiceprint, face, and activity\cite{ryoo2017privacy}. Traditional MCS systems usually rely on a sensing platform to aggregate and analyze sensing data collected by participants\cite{jin2018incentive} through machine learning\cite{capponi2019survey,liu2019boosting,liu2019floc}. Nevertheless, the sensing platform assembling data is generally untrusted\cite{wang2019towards}. For example, Facebook, as a centralized platform, collects users' data and leaks personal data to Cambridge Analytica. Moreover, participants usually communicate with other entities of MCS through the wireless network\cite{chessa2016empowering}. Unfortunately, the wireless network connection is unstable, in which participants frequently drop out. A participant consumes resources (e.g., data traffic, sensor) to participate in an MCS task, so an MCS system usually requires to provide incentives for participants\cite{zhao2020pace}. In brief, a robust MCS system needs to provide a fair incentive for each participant and protect participants' privacy.

Arguably, if we can outsource data collection tasks to participants, we can also outsource data processing tasks to participants. Federated learning (FL) proposed by McMahan \textit{et. al}\cite{mcmahan2017communication} is an effective method to outsource data processing tasks to participants, which has become a hot topic in privacy protection and machine learning communities. \texttt{FederatedAveraing} (\texttt{FedAvg}) algorithm has been proved to be feasible and effective for FL\cite{mcmahan2017communication}. Although FL has advantages in privacy protection and solving data islands\cite{yang2019federated}, local training models still leak private information\cite{liu2019boosting,nasr2019comprehensive,song2020analyzing}. Furthermore, how to encourage more participants to participate is also a challenging task\cite{yu2020fairness,yang2019federated}. Moreover, FL likewise faces the challenges of participant dropouts\cite{bonawitz2017practical}. In this paper, we research how to integrate the advantages of FL into MCS and solve facing challenges. From the perspective of existing work, it needs to tackle the following challenges.

\subsection{Related Work and Challenges}
\textbf{Privacy protection of participants.} Sensing data involves participants' private information\cite{capponi2019survey,miao2019privacy}. The online sensing platform, as a third party of data aggregation, is usually untrusted\cite{wang2019towards,capponi2019survey}. After knowing participants' private data, an untrusted sensing platform may sell participants' personal information for business recommendations and political elections analysis. To protect the privacy of participants, existing solutions usually adopt technologies including differential privacy\cite{wang2020sparse,wang2019towards}, encryption\cite{miao2017lightweight,zhao2020pace}, and so on. Admittedly, schemes\cite{miao2017lightweight,zhao2020pace} based on encryption are effective for privacy protection during data collection. However, the platform or requester needs to efficiently process collected sensing data through powerful data processing tools, such as machine learning\cite{capponi2019survey,liu2019boosting,liu2019floc}. Privacy-preserving data aggregation\cite{jiang2018data} usually fails to handle complex computations for data processing. Besides, privacy-preserving machine learning\cite{nasr2019comprehensive} and data analysis\cite{ma2019privacy} may be feasible. Unfortunately, they require data to be stored in a central server, which compromises the participant's right to be forgotten. Federated learning\cite{yang2019federated} enables data to be stored in local, however, model parameters may still disclose confidential information\cite{liu2019boosting,aono2018privacy}. In short, how to simultaneously protect participants' privacy and analyze sensing data remains to be explored.

\textbf{Solutions against participants dropout.} Due to the instability of wireless connection, participant dropouts in MCS are common\cite{xu2018practical,xu2020privacy}. For both MCS and FL, dropped participants reduce the number of sensing data or local training models, which might bring insufficient sensing and unreliable aggregation results. In general, prior solutions\cite{xu2018practical,bonawitz2017practical,liu2019boosting} employ the secret-sharing technology to restrain participant dropouts. Specifically, to restrain one participant (e.g., $p_i$) drops out, $p_i$ splits private information into multiple fragments and shares those fragments with multiple other participants. Admittedly, the secret-sharing can recover a model parameter via those fragments provided by other participants. Apparently, if $p_i$ may loss network connection with a sensing platform, it is challenging for $p_i$ to maintain multiple stable network connections and share fragments with other participants. Furthermore, when private information to be shared is huge, participants bear an unbearable communication burden. In short, it still faces a challenge on how to keep robustness against dropped participants while reducing the overhead.

\textbf{Privacy-preserving Incentive Mechanism Design.} Auction\cite{wen2015quality,wang2016quality} and posted pricing strategies\cite{zhao2020pace,qu2020posted} are often used to design a fair incentive mechanism. The incentive mechanism based on auction usually requires participants and a sensing platform to communicate multiple rounds to determine a winner. However, if an MCS system lacks a data evaluation mechanism, the winner might contribute sensing data that does not match a bid. On the contrary, the incentive mechanism based on posted pricing strategy can provide fair incentives for participants and determine participants' rewards based on participants' data\cite{han2018quality}. Unfortunately, to protect privacy, sensing data is encrypted or obfuscated. Existing solutions\cite{qu2020posted,han2018quality,yang2017designing} usually fail to protect data privacy and evaluate data quality simultaneously. Thus, it is challenging to set reasonable rewards for each participant without knowing participants sensing data. 

\subsection{Our Contributions}
To tackle the above challenges, we propose a privacy-preserving mobile crowdsensing system based on federated learning\cite{yang2019federated}, named \textsc{CrowdFL}\footnote{\textsc{CrowdFL}: mobile \underline{\textsc{Crowd}}sensing based on \underline{\textsc{F}}ederated \underline{\textsc{L}}earning}. Different from traditional MCS systems in which a sensing platform aggregates and processes participants sensing data, our proposed \textsc{CrowdFL} enables participants to store and process sensing data in participants' end-devices. Specifically, to prevent privacy leakage during data processing, we adopt federated learning to implement participants' collaborative sensing and train. To protect model privacy, participants encrypt local training models, and \texttt{FedAvg} is executed in a ciphertext field. Moreover, considering participant dropouts, discard and retransmission strategies are proposed. To set a reasonable reward for each participant and protect data privacy simultaneously, a privacy-preserving posted pricing incentive mechanism is designed. Table 1 shows a result of functional comparisons between \textsc{CrowdFL} and prior solutions. From Table 1, we can see that the proposed \textsc{CrowdFL} has advantages in privacy protection, reward distribution, and communication cost. The contributions of this paper can be summarized as follows.
\begin{table}
\centering
\label{cmptab}
\begin{threeparttable}
	\caption{Functional Comparison between \textsc{CrowdFL} and Prior Solutions}
	\centering
	\begin{tabular*}{0.49\textwidth}{lccccccc}
		\toprule
		Functions    & \cite{liu2019boosting} & \cite{liu2019floc} & \cite{zhao2020pace} & \cite{bonawitz2017practical} & \cite{xu2020privacy} & \cite{zheng2018learning} & \textsc{CrowdFL} \\\midrule
		Func 1       &         \cmark         &       \cmark       &          \cmark          &            \cmark            &           \cmark           &        \cmark        &      \cmark      \\
		Func 2       &         \cmark         &       \xmark       &          \cmark          &            \xmark            &           \cmark           &        \cmark        &      \cmark      \\
		Func 3       &         \cmark         &       \xmark       &          \xmark          &            \cmark            &           \cmark           &        \cmark        &      \cmark      \\
		Func 4       &         \xmark         &       \xmark       &          \cmark          &            \xmark            &           \xmark           &        \xmark        &      \cmark      \\
		Func 5 &         \xmark         &       N/A       &          N/A          &            \xmark            &           \xmark           &        \xmark        &  \cmark \\ \bottomrule
	\end{tabular*}
	\begin{tablenotes}
		\small
		\item \textbf{Note.} Func 1 and Func 2: privacy-preserving data aggregation and data processing, respectively; Func 3: robustness against participants dropout; Func 4: privacy-preserving reward distribution; Func 5: low communication overhead.
	\end{tablenotes}
\end{threeparttable}
\end{table}
\begin{itemize}
\item \emph{Privacy preservation during data processing}. Compared with the state-of-the-art, all sensing data collected by participants are stored in participants' end-devices in our proposed \textsc{CrowdFL}. Meanwhile, participants collaboratively train a model for a specific sensing task with the local sensing data. Moreover, to protect model privacy, we propose a privacy-preserving federated averaging algorithm (\texttt{PriFedAvg}).
\item \emph{Robustness against participant dropouts}. To keep the robustness of an MCS system against participant dropouts, we propose a discard strategy (\textsc{CrowdFL-D}) and a retransmission strategy (\textsc{CrodFL-R}), where participants are not required to execute extra computation and communication to avoid dropout.
\item \emph{Privacy-preserving reward distribution}. To address the dilemma of data quality-ware and privacy protection, we design a privacy-preserving posted pricing incentive mechanism (\texttt{PriRwd}) based on participants' encrypted model parameters. Specifically, the greater contribution of a participant to the global average model, the more rewards the participant is.
\end{itemize}

The rest of this paper is organized as follows. In Section 2, we describe the problem formulation and preliminaries. We elaborate on the workflow of \textsc{CrowdFL} in Section 3. In Section 4, we detail the proposed privacy-preserving federated averaging algorithm (\texttt{PriFedAvg}). In section 5, we describe discard and retransmission strategies to keep robustness for participant dropouts. We present our proposed privacy-preserving incentive mechanism in Section 6. In Section 7, we give privacy analysis for the proposed \textsc{CrowdFL}. The experimental evaluations are reported in Section 8. A conclusion is given in Section 9.

\section{Problem Formulation and Preliminaries}
\subsection{System Model}
\begin{figure}
\centering
\label{sysmod}
\includegraphics[width=0.88\linewidth]{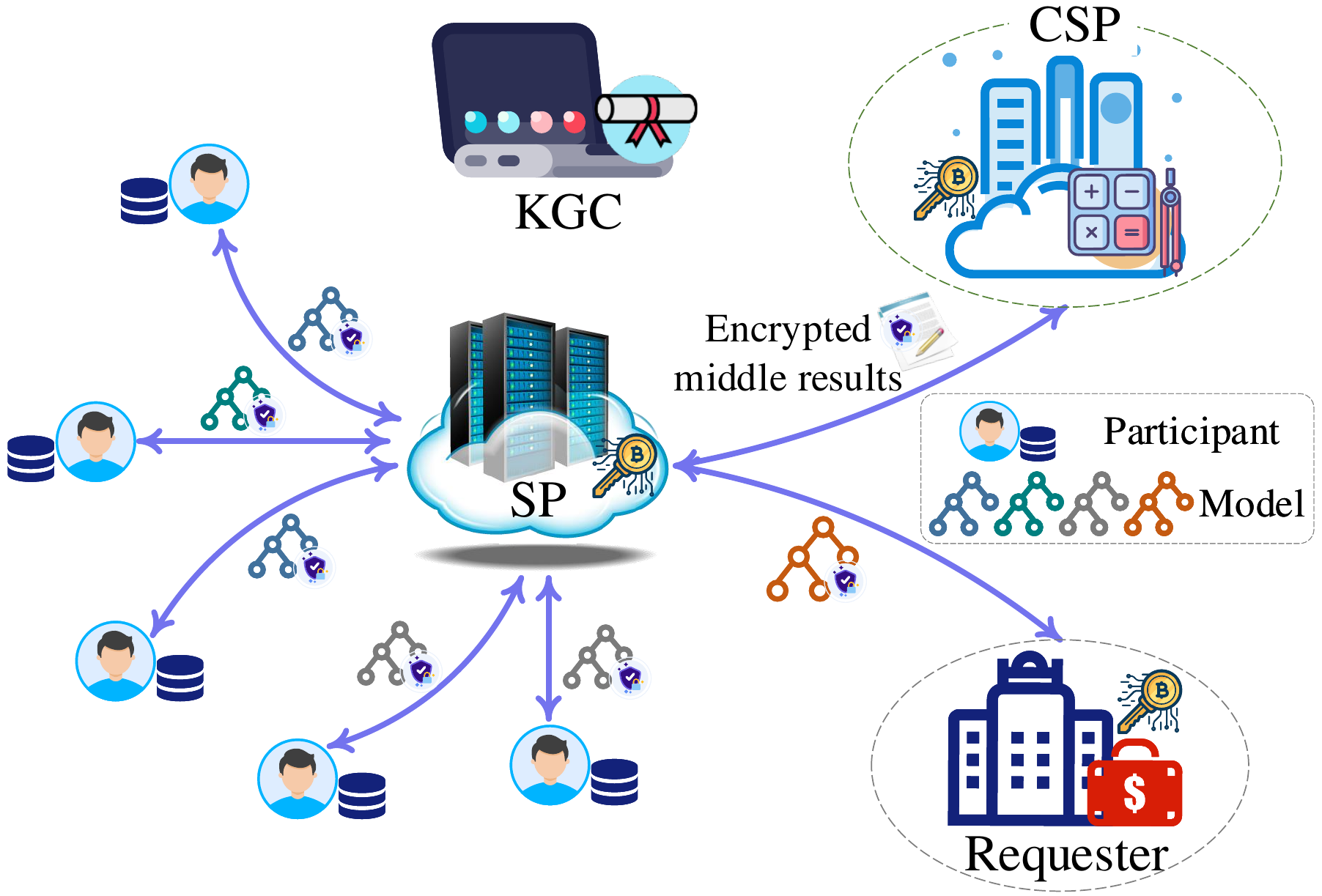}
\caption{System model of \textsc{CrowdFL}.}
\end{figure}
As depicted in Fig. 1, our proposed \textsc{CrowdFL} involves five entities, namely, a requester, participants, an online sensing platform (SP), an online computation server provider (CSP), and a key generation center (KGC).
\begin{itemize}
\item \textbf{Requester}: A requester requests participants to collect sensing data and collaboratively train a model in a federated learning manner. Besides, the requester provides rewards for participants who participate in data collection and model training. Particularly, the model aggregation is outsourced to an SP, and the requester only obtains a finally encrypted training model.
\item \textbf{Participants}: A participant collects sensing data with a mobile intelligent device and stores sensing data in her local device. After collecting sensing data, the participant trains a model with collected sensing data and submits an encrypted training model to an SP.
\item \textbf{SP}: An SP is responsible for recruiting participants and averaging participants' encrypted training models. Moreover, the SP takes charge of paying rewards for each participant.
\item \textbf{CSP}: A CSP provides computation services for the SP to assist to execute secure computation, such as \texttt{PriFedAvg} and \texttt{PriRwd}.
\item \textbf{KGC}: A KGC takes charge of distributing public/private keys to other entities.
\end{itemize}

\subsection{Threat Model}
In \textsc{CrowdFL}, entities except for the KGC are considered as \textit{honest-but-curious}\cite{liu2016efficient,liu2019boosting} who follows protocols but tries to learn other entities private information. Specifically, SP and CSP may try to learn participants' training models and a global average model. Participants might attempt to obtain other participants' training models. A requester tries to learn participants' training models and might refuse to pay rewards for participants. Moreover, we assume that SP and CSP do not collude. One possible construction is that SP and CSP belong to different organizations or companies. No collusion assumption is widely used in secure computation scenarios\cite{xu2020privacy,mohassel2017secureml,zheng2018learning,xu2019verifynet}, such as the state-of-the-art of privacy-preserving Federated learning\cite{xu2020privacy,xu2019verifynet}. To prevent the SP from obtaining participants' and the requester's private information, the requester and participants do not collude with SP. The KGC is responsible for managing keys in \textsc{CrowdFL} and is considered a trusted entity, such as a trusted third party (TA).

\subsection{Preliminaries}
\subsubsection{Federated Average Algorithm}
Federated average algorithm (\texttt{FedAvg}) is an iterative algorithm, and consists of local training and model aggregation\cite{mcmahan2017communication}. Suppose $n$ participants are indexed by $i$ (each participant $i$ owns a data set $D_i$), $B$ is the local minibatch size, $E$ is the number
of local epochs, and $\eta$ is the learning rate. In each round iteration, \texttt{FedAvg} processes as follows:
\begin{itemize}
\item[(1)] \textbf{Local training:} Participants split $D_i$ into batches of size $B$ denoted by $\mathcal{B}$. For each local epoch $i$ from 1 to $E$ and batch $b\in\mathcal{B}$, participants calculate
\begin{align}
	\omega\leftarrow \omega-\eta\nabla\ell(\omega; b),
\end{align}
where $\ell(\omega; b)$ denotes the loss of the prediction on example
$b$ made with model $\omega$. After training, participants upload the model $\omega$ to an aggregation server.
\item[(2)] \textbf{Model aggregation:} After receiving $n$ training models of participants, an aggregation server compute a global average model
\begin{align}
	\bar{\omega}\leftarrow\sum_{i=1}^n\frac{\delta_i}{\sum_{i=1}^{n}\delta_i}\omega_i,
\end{align}
where $\delta_i=|D_i|$, i.e., the size of $D_i$.
\end{itemize}

\subsubsection{Paillier Cryptosystem with Threshold Decryption (PCTD)}
PCTD is a variant of the Paillier cryptosystem\cite{paillier1999public}. The key idea of PCTD is to split the private key of the Paillier cryptosystem into two parts. Any single part cannot effectively decrypt a given ciphertext encrypted by the Paillier cryptosystem. PCTD comprises the following algorithms:

\textbf{KeyGen}: Let $\zeta$ be a security parameter and $p, q$ be two large prime number with $\zeta$ bits. Compute $N=p\cdot q$, $\lambda=(p-1)\cdot(q-1)$, and $u=\lambda^{-1}\mod N$. Choose $g=N+1$. The public key is denoted by $pk=(g, N)$, and the private key is denoted by $sk=(\lambda,u)$. 

The private key $\lambda$ is split into two parts denoted by $sk_1=\lambda_1$ and $sk_2=\lambda_2$, s.t., $\lambda_1+\lambda_2=0\mod\lambda$ and $\lambda_1+\lambda_2=1\mod N$. According
to the Chinese remainder theorem\cite{pei1996chinese}, we can calculate $\varepsilon=\lambda_1+\lambda_2=\lambda\cdot u\mod(\lambda\cdot N)$ to make $\varepsilon=0\mod\lambda$ and $\varepsilon=1\mod N$ hold simultaneously, where $\lambda_1\in[1, \lambda\cdot u]$ and $\lambda_2=\varepsilon-\lambda_1$.

\textbf{Encryption} (\texttt{Enc}): Given a message $m\in\mathbb{Z}_N$, select a random number $r\in\mathbb{Z}_N^*$, and the ciphertext of $m$ is generated as
$$\llbracket m\rrbracket=g^m\cdot r^N\mod N^2=(1+m\cdot N)\cdot r^N\mod N^2.$$

\textbf{Decryption} (\texttt{Dec}): Take a given ciphertext $\llbracket m\rrbracket$ and $sk$ as inputs, compute
$$m=L(\llbracket m\rrbracket^\lambda\mod N^2)\cdot u\mod N,$$
where $L(x)=\frac{x-1}{N}$.

\textbf{Partial Decryption} (\texttt{PDec}): Take a cihpertext $\llbracket m\rrbracket$ and partial private key $sk_i$ as inputs ($i\in\{1, 2\}$), \texttt{PDec} calculates as follows:
$$M_i=\llbracket m\rrbracket^{\lambda_i}=r^{\lambda_i\cdot N}\cdot(1+m\cdot N\cdot \lambda_i)\mod N^2.$$

\textbf{Threshold Decryption} (\texttt{TDec}): Take partial decrypted ciphertexts $\langle M_1,M_2\rangle$ as inputs, \texttt{TDec} computes
$$m=L(M_1\cdot M_2\mod N^2).$$

The only difference between the PCTD and the Paillier cryptosystem is the former sets two partial private keys $\lambda_1$ and $\lambda_2$. If the Paillier cryptosystem is semantic security\cite{paillier1999public}, the PCTD is also semantic security. Moreover, known one partial private key (e.g., $\lambda_1$), it is not feasible to calculate the others (i.e., $\lambda_2$) due to $\lambda$ being private. Thus, any adversary cannot decrypt a ciphertext by only one partial private key. The PCTD has additive homomorphism and scalar-multiplication homomorphism properties. Specifically, given $\llbracket a\rrbracket$, $\llbracket b\rrbracket$, and $c\in\mathbb{Z}_N$, $\texttt{Dec}(\llbracket a+b\rrbracket)=\texttt{Dec}(\llbracket a\rrbracket\cdot\llbracket b\rrbracket)$ and $\texttt{Dec}(\llbracket c\cdot a\rrbracket)=\texttt{Dec}(\llbracket a\rrbracket^c)$ hold.

\section{Overview of \textsc{CrowdFL}}
In this section, we first elaborate on the workflow of \textsc{CrowdFL}. Then, we
present two building blocks used in \textsc{CrowdFL}.

\subsection{Overview}
Fig. 2 shows the workflow of \textsc{CrowdFL}. The detailed processes are listed as follows. 

(1) \textit{Setup}: A requester first setups sensing data collection tasks and formulates data processing tasks into a federated learning task. Moreover, KGC generates a public/private key pair $\langle pk, sk\rangle$ and two partial decryption key pairs $\langle\lambda_c, \lambda_{sc}\rangle$ and $\langle\lambda_p, \lambda_{sp}\rangle$. KGC distributes $pk,\lambda_p$ to participants, $pk,\lambda_{sc},\lambda_{sp}$ to SP, and $pk,\lambda_c$ to CSP.

(2) \textit{Local pre-training}: Participants firstly collect sensing data and locally train a model according to sensing tasks and data processing tasks. To protect privacy, participants encrypt training models $\{\omega_1,\cdots,\omega_n\}$ with $pk$, and send $\{\llbracket \omega_1\rrbracket,\cdots, \llbracket \omega_n\rrbracket\}$ to SP.

(3) \textit{Model aggregation}: SP aggregates encrypted training models and carries out the proposed \texttt{PriAvgFed} algorithm to generate an encrypted global average model $\llbracket\bar{\omega}\rrbracket$, where the proposed \texttt{PriAvgFed} algorithm requires SP and CSP to jointly participate in a privacy-preserving manner. After that, SP calculates one partially decrypted global average model $\llbracket\bar{\omega}_s\rrbracket$ and returns $\langle\llbracket\bar{\omega}_s\rrbracket, \llbracket\bar{\omega}\rrbracket\rangle$ back to participants.

(4) \textit{Local retraining}: Participants decrypt the encrypted global average model $\llbracket\bar{\omega}\rrbracket$ to obtain another partially decrypted global average model $\bar{\omega}_p$. Participants take two partially decrypted global average models $\llbracket\bar{\omega}_s\rrbracket$ and $\llbracket\bar{\omega}_p\rrbracket$ as inputs to calculate a decrypted global average model $\bar{\omega}$. Then, participants retrain new models based on collected sensing data and $\bar{\omega}$, encrypt new training models, and send encrypted models to SP. Steps (3) and (4) are repeated until reaching maximum training times. The requester obtains the finally encrypted global average model.

(5) \textit{Reward distribution}: SP and CSP jointly execute our proposed privacy-preserving reward distribution protocol (\texttt{PriRwd}) to distribute rewards for each participant.
\begin{figure}
\label{overview}
\centering
\includegraphics[width=0.99\linewidth]{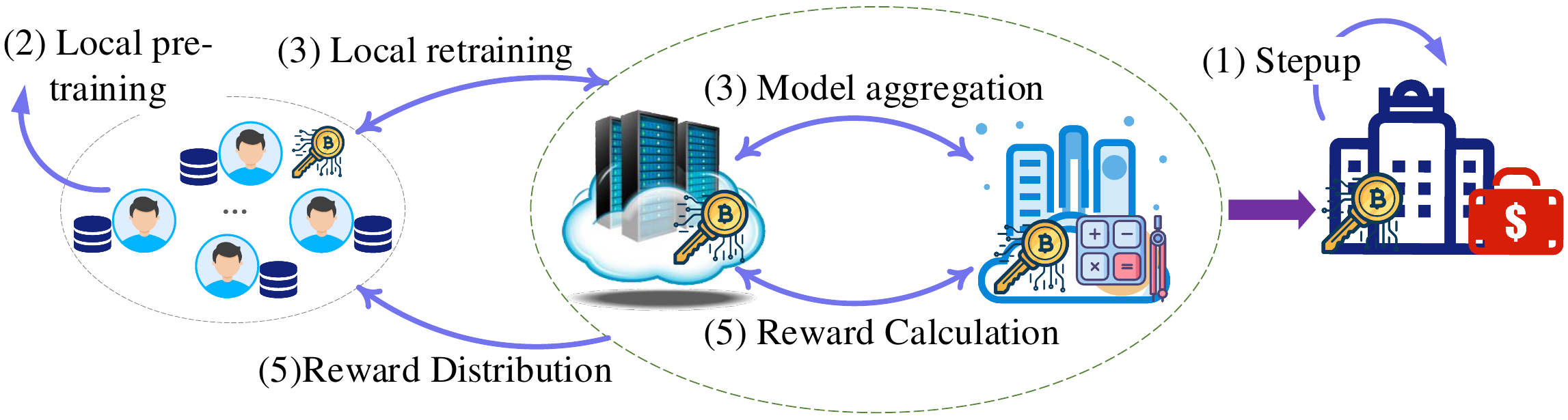}
\caption{Workflow of \textsc{CrowdFL}.}
\end{figure}

Note that if a participant drops out due to network connection loss in Step (4), the SP adopts our proposed discard or retransmission strategies to handle it. The details are given in Section 5.

\subsection{Building Blocks}
To prevent training models from leaking privacy and realize quality-aware in a privacy-preserving manner, we propose two critical building blocks secure division protocol (\texttt{SDIV}) and secure multiplication protocol (\texttt{SMUL}) to construct a privacy-preserving federated averaging algorithm (\texttt{PriFedAvg}) and privacy-preserving reward distribution mechanism (\texttt{PriRwd}).

\subsubsection{Secure Division Protocol (\texttt{SDIV})}
\begin{algorithm}[htb]  
\caption{$\texttt{SDIV}(\llbracket x\rrbracket, \llbracket y\rrbracket)\rightarrow\llbracket x\div y\rrbracket$} 
\label{sdiv} 
\begin{algorithmic}[1]
	\REQUIRE SP has $\llbracket x\rrbracket$ and $\llbracket y\rrbracket$, where $x,y\in(0, 2^\kappa)$;\\
	\STATE SP:
	\begin{itemize}[itemsep=0pt,topsep=1.25pt,itemindent=5pt]
		\item[(a).] $X\leftarrow\llbracket x\rrbracket^r\cdot\llbracket y\rrbracket^{r\alpha+e}$ and $Y\leftarrow\llbracket x\rrbracket^r$, where $\alpha$ and $e$ are random numbers with $\sigma$ bits and $\kappa$ bits, respectively, and $r$ is a random composite number in $[2^{\kappa+1}, 2^{\log_2N-\kappa-\sigma-2})$;
		\item[(b).] $X_1\leftarrow\texttt{PDec}(X, \lambda_{sc})$ and $Y_1\leftarrow\texttt{PDec}(Y, \lambda_{sc})$;
		\item[(c).] Send $\langle (X, X_1), (Y, Y_1)\rangle$ to CSP.\
	\end{itemize}
	\STATE CSP:
	\begin{itemize}[itemsep=0pt,topsep=1.25pt,itemindent=5pt]
		\item[(a).] $X_2\leftarrow\texttt{PDec}(X, \lambda_c)$ and $Y_2\leftarrow\texttt{PDec}(Y, \lambda_c)$;
		\item[(b).] $x'\leftarrow\texttt{TDec}(X_1, X_2)$ and $y'\leftarrow\texttt{TDec}(Y_1, Y_2)$;
		\item[(c).] $\llbracket x'\div y'\rrbracket\leftarrow\texttt{Enc}([x'\div y'],pk)$;
		\item[(d).] Send $\llbracket x'\div y'\rrbracket$ to SP.
	\end{itemize}
	\STATE SP:
	\begin{itemize}[itemsep=0pt,topsep=1.25pt,itemindent=5pt]
		\item[(a).] $\llbracket\alpha\rrbracket\leftarrow\texttt{Enc}(\alpha,pk)$ and $\llbracket e\div r\rrbracket\leftarrow\texttt{Enc}([e\div r],pk)$;
		\item[(b).] $\llbracket x\div y\rrbracket\leftarrow\llbracket x'\div y'\rrbracket\cdot\llbracket\alpha\rrbracket^{N-10^L}\cdot\llbracket e\div r\rrbracket^{N-1}$.
	\end{itemize}
\end{algorithmic}
\end{algorithm}
\texttt{SDIV} needs cooperation between SP and CSP. Particularly, SP and CSP do not know each other's private information. Specifically, SP has $\llbracket x\rrbracket$ and $\llbracket y\rrbracket$, where $x,y\in(0, 2^\kappa)$ and $\kappa$ is the number of bits on the upper bound of $x,y$, such as $\kappa=32$. The goal of SP is to obtain $\llbracket x\div y\rrbracket$. If $x\div y$ is decimal, we use a rounding factor $L$ to transform $x\div y$ into an integer, i.e., $[x\div y]\leftarrow \lfloor x\div y\cdot 10^L\rfloor$. For simplify, $\llbracket [x\div y]\rrbracket$ is denoted by $\llbracket x\div y\rrbracket$. Let $\sigma$ be the statistical security parameter, such as $\sigma=80$. The processes of \texttt{SDIV} are briefly described below. (1) SP obscures $x$ and $y$ by adding random noises in the ciphertexts. After that, SP partially decrypts $X,Y$ into $X_1, Y_1$ with a partial key $\lambda_{sc}$, and sends $\langle (X, X_1), (Y, Y_1)\rangle$ to CSP. (2) CSP calls \texttt{PDec} and \texttt{TDec} to obtain $x'$ and $y'$. Next, CSP calculates $x'\div y'$. It is easy to verify that $x'\div y'=x\div y+\alpha+e\div y$. CSP sends the encrypted $\llbracket x'\div y'\rrbracket$ to SP. (3) SP disposes noise $[\alpha]$ and $[e\div r]$. More details of \texttt{SDIV} are shown in Algorithm \ref{sdiv}.

According to the property of floor function, we have
\begin{equation}
[x\div y+\alpha+e\div r]=\left\{
\begin{aligned}
	&[x\div y]+[\alpha]+[e\div r]+1, \\
	&[x\div y]+[\alpha]+[e\div r].
\end{aligned}\nonumber
\right.
\end{equation}
Thus, we can learn that 
\begin{equation}
[x'\div y'-\alpha-e\div r]\!=\!\left\{
\begin{aligned}
	&[x\div y]+1,\mbox{ for $x\!\!\!\!\!\mod\!y+e\!\!\!\!\!\mod\!r\geq 1$}\\
	&[x\div y],\mbox{ for $x\!\!\!\!\!\mod\!y+e\!\!\!\!\!\mod\!r<1$}.
\end{aligned}\nonumber
\right.
\end{equation}
And since $\frac{1}{2}<\frac{2}{3}<\cdots<\frac{2^\kappa-2}{2^\kappa-1}$ and $\frac{e}{r}\leq\frac{2^\kappa-1}{2^{\kappa+1}}$, we have $$x\!\!\!\mod y+e\!\!\!\mod r<\frac{2^\kappa-2}{2^\kappa-1}+\frac{2^\kappa-1}{2^{\kappa+1}}<1.$$ Hence, we can learn that $[x'\div y'-\alpha-e\div r]=[x\div y]$ always holds.

\subsubsection{Secure Multiplication Protocol (\texttt{SMUL})}
The goal of \texttt{SMUL} is to enable SP that has $\llbracket x\rrbracket$ and $\llbracket y\rrbracket$ to obtain $\llbracket x\cdot y\rrbracket$. As the Paillier cryptosystem only supports additive homomorphism and scalar-multiplication homomorphism, \texttt{SMUL} needs the assist of CSP. To prevent CSP from learning $x$ and $y$, SP adds random noise to protect $x,y$. The steps of \texttt{SMUL} are briefly described below. (1) SP first adds random noises $r_1,r_2$ into $\llbracket x\rrbracket$ and $\llbracket y\rrbracket$ to obscure $x,y$. Then, SP partially decrypts ciphertexts added noises and sends $\langle (X, X_1), (Y, Y_1)\rangle$ to CSP. (2) CSP calls \texttt{PDec} and \texttt{TDec} to obtain $x'$ and $y'$ and calculate $x'\cdot y'$. After that, CSP encrypts $x'\cdot y'$ and sends $\llbracket x'\cdot y'\rrbracket$ to SP. It easily verifies that $x'\cdot y'=x\cdot y+r_2x+r_1y+r_1r_2$. (3) To obtain $\llbracket x\cdot y\rrbracket$ for given $\llbracket x'\cdot y'\rrbracket$, SP first calculates $\llbracket r_2\cdot x\rrbracket$, $\llbracket r_1\cdot y\rrbracket$, and $\llbracket r_1\cdot r_2\rrbracket$. Finally, SP uses the additive homomorphism and scalar-multiplication homomorphism to obtain $\llbracket x\cdot y\rrbracket$. Algorithm \ref{smul} shows more details of \texttt{SMUL}.
\begin{algorithm}[htb] 
\caption{$\texttt{SMUL}(\llbracket x\rrbracket, \llbracket y\rrbracket)\rightarrow\llbracket x\cdot y\rrbracket$} 
\label{smul} 
\begin{algorithmic}[1]
	\REQUIRE SP has $\llbracket x\rrbracket$ and $\llbracket y\rrbracket$, where $x,y\in(-2^\kappa, 2^\kappa)$;\\
	\STATE SP:
	\begin{itemize}[itemsep=0pt,topsep=1.25pt,itemindent=5pt]
		\item[(a).] $X\leftarrow\llbracket x\rrbracket\cdot\llbracket r_1\rrbracket$ and $Y\leftarrow\llbracket y\rrbracket\cdot\llbracket r_2\rrbracket$, where $r_1,r_2$ are random numbers with $\sigma$ bits;
		\item[(b).] $X_1\leftarrow\texttt{PDec}(X, \lambda_{sc})$ and $Y_1\leftarrow\texttt{PDec}(Y, \lambda_{sc})$;
		\item[(c).] Send $\langle (X, X_1), (Y, Y_1)\rangle$ to CSP.\
	\end{itemize}
	\STATE CSP:
	\begin{itemize}[itemsep=0pt,topsep=1.25pt,itemindent=5pt]
		\item[(a).] $X_2\leftarrow\texttt{PDec}(X, \lambda_c)$ and $Y_2\leftarrow\texttt{PDec}(Y, \lambda_c)$;
		\item[(b).] $x'\leftarrow\texttt{TDec}(X_1, X_2)$ and $y'\leftarrow\texttt{TDec}(Y_1, Y_2)$;
		\item[(c).] $\llbracket x'\cdot y'\rrbracket\leftarrow\texttt{Enc}(x'\cdot y',pk)$;
		\item[(d).] Send $\llbracket x'\cdot y'\rrbracket$ to SP.
	\end{itemize}
	\STATE SP:
	\begin{itemize}[itemsep=0pt,topsep=1.25pt,itemindent=5pt]
		\item[(a).] $\llbracket r_2\cdot x\rrbracket\leftarrow\llbracket x\rrbracket^{r_2}$, $\llbracket r_1\cdot y\rrbracket\leftarrow\llbracket y\rrbracket^{r_1}$, and $\llbracket r_1\cdot r_2\rrbracket\leftarrow\texttt{Enc}(r_1\cdot r_2,pk)$;
		\item[(b).] $\llbracket x\cdot y\rrbracket\leftarrow\llbracket x'\cdot y'\rrbracket\cdot\llbracket r_2\cdot x\rrbracket^{N-1}\cdot\llbracket r_1\cdot y\rrbracket^{N-1}\cdot\llbracket r_1\cdot r_2\rrbracket^{N-1}$.
	\end{itemize}
\end{algorithmic}
\end{algorithm}

\section{Privacy-preserving Federated Averaging}
In this section, we detail our proposed privacy-preserving federated averaging algorithm (\texttt{PriFedAvg}). Given encrypted training models, \texttt{PriFedAvg} requires SP and CSP to jointly calculate an encrypted global average model, where either SP or CSP cannot obtain the global average model alone. Specifically, the additive homomorphism property of PCTD is used to compute the sum of encrypted training models. Besides, the proposed \texttt{SDIV} is used to calculate the average of encrypted training models.

\subsection{Training Model Encryption}
The Paillier cryptosystem only works on the integer field, that is, a message $m$ to be encrypted should be an element of $\mathbb{Z}_N$. However, $\omega_i$ may be decimal. To tackle this problem, we introduce a rounding factor $L$ that is larger than the maximum number of decimal places of $\omega_i$. Formally, $[\omega_i]\leftarrow\lfloor\omega_i\cdot 10^L\rfloor$. For example, suppose the maximum number of decimal places is 5 and $L=6$ ($L>5$), given a decimal $a=-1.234$, $[a]=-1.234\cdot 10^6=-123400$. Particularly, if $a=1.2345678$ and $L=6$, $[a]=1234567$.

In \textsc{CrowdFL}, we assume all decimals are denoted by the fixed-point number and the decimal places is $\ell$. Given a decimal number $a$ with finite decimal places and the rounding factor $L$ ($L>\ell$), $a$ is encrypted into $\llbracket[a]\rrbracket$ (in this paper, $\llbracket[a]\rrbracket$ is abbreviated to $\llbracket a\rrbracket$). Noted that if $[a]=-5$ and $N=21$, $\llbracket[a]\rrbracket=\llbracket 16\rrbracket$. Moreover, suppose $L=6$ and $h=\texttt{Dec}(\llbracket 31415\rrbracket)$, $h$ denotes the decimal $0.031415$.

\subsection{\texttt{PriFedAvg} Design}
\begin{algorithm}[htb] 
\caption{$\texttt{PriFedAvg}(\{\langle\llbracket \delta_i\cdot \omega_i\rrbracket,\llbracket\delta_i\rrbracket\rangle\}_{i=1}^n)\rightarrow\llbracket\bar{\omega}_T\rrbracket$} 
\label{fedavg} 
\begin{algorithmic}[1]
	\REQUIRE SP has $\{\langle\llbracket \delta_i\cdot \omega_i\rrbracket,\llbracket\delta_i\rrbracket\rangle\}_{i=1}^n$, participants have $\{D_i\}_{i=1}^n$, and the requester obtains $\llbracket\bar{\omega}_T\rrbracket$;\\
	\STATE Participants train initial models and upload encrypted training models messages $\{\langle\llbracket \delta_i\cdot \omega_i\rrbracket,\llbracket\delta_i\rrbracket\rangle\}_{i=1}^n$ to SP;
	\FOR{$j=1$ to $T$}
	\STATE SP initializes $\mathcal{M}=\llbracket \delta_1\cdot \omega_1\rrbracket$ and $\mathcal{D}=\llbracket \delta_1\rrbracket$;
	\FOR{$i=2$ to $n$}
	\STATE SP calculates $\mathcal{M}\leftarrow\mathcal{M}\cdot\llbracket \delta_i\cdot \omega_i\rrbracket$ and $\mathcal{D}\leftarrow\mathcal{D}\cdot\llbracket\delta_i\rrbracket$;
	\ENDFOR
	\STATE SP and CSP jointly execute $\llbracket\bar{\omega}_j\rrbracket\leftarrow\texttt{SDIV}(\mathcal{M}, \mathcal{D})$;
	\IF{$j$ equals to $T$}
	\STATE SP sends $\llbracket\bar{\omega}_T\rrbracket$ to the requester and exits algorithm;
	\ENDIF
	\STATE SP computes $\llbracket\bar{\omega}_{j,1}\rrbracket\leftarrow\texttt{PDec}(\llbracket\bar{\omega}_j\rrbracket, \lambda_{sp})$ and sends $\langle\llbracket\bar{\omega}_j\rrbracket, \llbracket\bar{\omega}_{j,1}\rrbracket\rangle$ to participants;
	\STATE Participants calculates $\llbracket\bar{\omega}_{j,2}\rrbracket\leftarrow\texttt{PDec}(\llbracket\bar{\omega}_j\rrbracket, \lambda_p)$ and $\bar{\omega}_j\leftarrow\texttt{TDec}(\llbracket\bar{\omega}_{j,1}\rrbracket, \llbracket\bar{\omega}_{j,2}\rrbracket)$;
	\STATE Participants take $\bar{\omega}_j$ and sensing data set $\{D_i\}_{i=1}^{n}$ as inputs and retrain new models. Next, participants send encrypted training models messages $\{\langle\llbracket \delta_i\cdot \omega_i\rrbracket,\llbracket\delta_i\rrbracket\rangle\}_{i=1}^n$ to SP;
	\ENDFOR
\end{algorithmic}
\end{algorithm}
Based on the proposed \texttt{SDIV}, we can design a privacy-preserving federated average algorithm (\texttt{PriFedAvg}), which enables SP and CSP to jointly calculate an encrypted global average model. In \texttt{PriFedAvg}, SP and CSP do not have knowledge of the global average model. \texttt{PriFedAvg} is shown in Algorithm \ref{fedavg}. $\delta_i$ is the amount of a participant $i$ owning training data items, i.e., $\delta_i=|D_i|$. $T$ denotes the maximum training rounds. A brief description of Algorithm \ref{fedavg} is given below.

Participants train local models based on collected sensing data and the aggregated global average model, and encrypt and upload encrypted training models to SP (Line 1 and Line 13). To generate a global average model in a privacy-preserving manner, after aggregating participants' encrypted training models, SP first calculates the sum of encrypted training models (Lines 3-6). Next, SP and CSP jointly carry out the proposed \texttt{SDIV} to compute the global average model in a ciphertext field (Line 7). In general, each participant needs to obtain a global average model to train a new model before reaching maximum training rounds. To this end, SP partially decrypts the encrypted average model, and then the participant can compute the global average model by using \texttt{PDec} and \texttt{TDec} in turn (Lines 11-12). 

Arguably, given $\sum_{i=1}^{n}\frac{\delta_i}{\pi}\omega_i$ ($\pi=\sum_{i=1}^{n}\delta_i$), a participant $i$ that only knows $\omega_i,\delta_i$ fails to learn other participants model parameters. Moreover, SP and CSP jointly calculate the global average model in the ciphertext field. Thus, \texttt{PriFedAvg} does not disclose participants model privacy.

\section{Strategy Against Dropout}
To the best of our knowledge, existing solutions\cite{liu2019boosting,bonawitz2017practical} usually adopt the reconstruction capability of threshold secret sharing to keep robustness against dropped participants. Unfortunately, a solution based on the threshold secret sharing requires more computation and network connections. If a participant may lose a network connection, it may be an unrealistic countermeasure to require the participant to build more network connections to restrain the network connection loss. To relieve this dilemma, we propose discard and retransmission strategies against participant dropouts.

\subsection{Discard Strategy}
The key idea of discard strategy is that even though one participant disconnects, other online participants can still collaboratively produce a global average model.
\begin{definition}[Discard Strategy]
Considering unstable wireless network connections, SP sets a time interval, e.g., $[t_0, t_0+t_\Delta]$. Participants' training models only submitted in the time range $[t_0, t_0+t_\Delta]$ are considered a valid training model. Participants' training models submitted beyond the time range $[t_0, t_0+t_\Delta]$ are discarded.
\end{definition}

Suppose participants' training model collection $\Omega=\{\omega_1,\cdots,\omega_n\}$, where each participant $i$ trains a model $\omega_i$. According to Algorithm \ref{fedavg} (\texttt{PriFedAvg}), $n$ training models can generate a global average model as follow
\begin{equation}
\bar{\omega}^{(n)}=\frac{\sum_{i=1}^{n}\delta_i\cdot\omega_i}{\sum_{i=1}^{n}\delta_i}.
\end{equation}

Without loss of generality, we assume that the participant $n$ drops out during model aggregation. Apparently, $n-1$ participants' training models can still produce a global average model, i.e.,
\begin{equation}
\bar{\omega}^{(n-1)}=\frac{\sum_{i=1}^{n-1}\delta_i\cdot \omega_i}{\sum_{i=1}^{n-1}\delta_i}.
\end{equation}
Let $\Delta=$ Eq. (4)$-$Eq. (3), we have
\begin{equation}
\Delta=\frac{\sum_{i=1}^{n-1}\delta_i\cdot \omega_i}{\sum_{i=1}^{n-1}\delta_i}-\frac{\sum_{i=1}^{n}\delta_i\cdot \omega_i}{\sum_{i=1}^{n}\delta_i}=\frac{\delta_n\cdot(\bar{\omega}^{(n)}-\omega_n)}{\sum_{i=1}^{n-1}\delta_i}.
\end{equation}
As $\bar{\omega}^{(n)}$ is the average model of $n$ participants' training models, $\omega_n$ is close to $\bar{\omega}^{(n)}$. Therefore, when $\frac{\delta_n}{\sum_{i=1}^{n-1}\delta_i}$ is sufficiently large, $\Delta$ will be extremely small. Particularly, if $\omega_n$ is far from $\bar{\omega}^{(n)}$, $n-1$ participants' training models will be closer to $\bar{\omega}^{(n-1)}$ when $\omega_n$ is discarded. Hence, the influence of a discarded training model on the global average model is limited. 

Now, considering $k$ participants may drop out, we can learn that 
\begin{equation}
\bar{\omega}^{(n-k)}=\frac{\sum_{i=1}^{n-k}\delta_i\cdot \omega_i}{\sum_{i=1}^{n-k}\delta_i}. 
\end{equation}
Let $\Delta=$ Eq. (6)$-$Eq. (3), we can derive
\begin{align}
\Delta&=\frac{\sum_{i=1}^{n-k}\delta_i\cdot \omega_i}{\sum_{i=1}^{n-k}\delta_i}-\frac{\sum_{i=1}^{n}\delta_i\cdot \omega_i}{\sum_{i=1}^{n}\delta_i}\nonumber\\
&=\frac{\sum_{i=n-k+1}^{n}\delta_i\cdot (\bar{\omega}^{(n)}-\omega_i)}{\sum_{i=1}^{n-k}\delta_i}.
\end{align}
Since $\bar{\omega}^{(n)}$ is the global average model of $n$ participants' training models, $\omega_i$ ($n-k+1\leq i\leq n$) is close to $\bar{\omega}^{(n)}$. Thus, when $\frac{\sum_{i=n-k+1}^{n}\delta_i}{\sum_{i=1}^{n-k}\delta_i}$ is big enough, $\Delta$ will be very small. Besides, if $k$ dropped participants' training models are far from $\bar{\omega}^{(n)}$, $n-k$ participants' training models are closer to $\bar{\omega}^{(n-k)}$. Thus, discarded training models have limited influences on the global average model.

Take together, it is not difficult for us to observe that even though dropped participants' training models are discarded, online participants can still produce a global average model which is close to that all participants are online to produce.

\subsection{Retransmission Strategy}
The key idea of a retransmission strategy is that a new global average model can be recalculated when a calculated global average model and a retransmitted model are given.

\begin{definition}[Retransmission Strategy]
Considering unstable wireless network connections, SP allows a dropped participant to resubmit a training model $\omega_n$ via connecting again. Specifically, given a global average model $\bar{\omega}^{(n-1)}$ of $n-1$ participants and the reuploaded training model $\omega_n$, SP recalculates a new global average model and sends the new average model to all online participants.
\end{definition}

Without loss of generality, we assume that the participant $n$ drops out but resubmits a training model. According to our proposed \texttt{PriFedAvg}, we can learn that SP has $\llbracket\sum_{i=1}^{n-1}\delta_i\cdot\omega_i\rrbracket$. Given $\langle\llbracket\delta_n\cdot \omega_n\rrbracket, \llbracket\delta_n\rrbracket\rangle$, SP can calculate
\begin{align}
\llbracket\sum_{i=1}^{n}\delta_i\rrbracket&=\llbracket\sum_{i=1}^{n-1}\delta_i\rrbracket\cdot\llbracket\delta_n\rrbracket,\\
\llbracket\sum_{i=1}^{n}\delta_i\cdot \omega_i\rrbracket&=\llbracket\sum_{i=1}^{n-1}\delta_i\cdot \omega_i\rrbracket\cdot\llbracket\delta_n\cdot\omega_n\rrbracket.
\end{align}
After that, SP and CSP can jointly execute $\texttt{SDIV}(\llbracket\sum_{i=1}^{n}\delta_i\cdot \omega_i\rrbracket, \llbracket\sum_{i=1}^{n}\delta_i\rrbracket)$ to obtain $\llbracket\bar{\omega}^{(n)}\rrbracket$. After that, SP calculates $\bar{\omega}^{(n)}_{1}\leftarrow\texttt{PDec}(\llbracket\bar{\omega}^{(n)}\rrbracket, \lambda_{sp})$ and sends $\langle\bar{\omega}^{(n)}, \bar{\omega}_1^{(n)}\rangle$ to all online participants. Once receiving $\langle\bar{\omega}^{(n)}, \bar{\omega}_1^{(n)}\rangle$, participants can calculate $\bar{\omega}_2^{(n)}\leftarrow\texttt{PDec}(\llbracket\bar{\omega}^{(n)}\rrbracket, \lambda_{p})$ and $\bar{\omega}^{(n)}\leftarrow\texttt{TDec}(\bar{\omega}_1^{(n)}, \bar{\omega}_2^{(n)})$. Therefore, participants can obtain a global average model of $n$ training models. Note that if SP has already sent the global average model to participants, the dropped participant's training model is still discarded.

\section{Privacy-preserving Reward Distribution}
Incentive mechanism design is a significant research field in both MCS\cite{karaliopoulos2019optimal,zhao2018federated,qu2020posted}. To achieve the incentive mechanism's truthfulness and fairness, we adopt a “posted pricing” mechanism\cite{zhao2020pace,qu2020posted}, which has been proved that is truthfulness and fairness\cite{singla2013truthful}. The challenge of a “posted pricing” mechanism is to set a reasonable price for each participant.

\textsc{CrowdFL} adopts a federated learning approach to outsources data process to participants, which requires participants to train models multiple times until the global average model converging or reaching the maximum number of iterations $T$. Besides, a participant might drop out in each round of model aggregation, which fails to submit a training model. Thus, in each round model aggregation, our proposed reward distribution mechanism pays rewards for participants. Particularly, if a participant drops out in one round model aggregation, the participant can submit her training model in the next round model aggregation to obtain a reward.

In \textsc{CrowdFL}, a requester takes a final global average model $\bar{\omega}$ as a sensing result. Arguably, if a participant's training model $\omega_i$ is closer to $\bar{\omega}$, the participant's training model $\omega_i$ is more effective. Therefore, the participant should obtain more rewards. In general, the participant collect more sensing data, the participant consumes more resources. Thus, an effective reward distribution should consider the amount of collected sensing data. We propose a reward distribution mechanism as follows
\begin{equation}
\setlength{\nulldelimiterspace}{0pt}
\left\{\begin{IEEEeqnarraybox}[\relax][c]{l's}
	w_i=\frac{\delta_i}{\sum_{i=1}^{n}\delta_i}\cdot\frac{\sum_{i=1}^{n}\mathtt{d}(\omega_i,\bar{\omega})}{\mathtt{d}(\omega_i,\bar{\omega})+\epsilon},\\
	\mu_i=b_t\cdot\frac{w_i}{\sum_{i=1}^{n}w_i},
\end{IEEEeqnarraybox}\right.
\end{equation}
where $w_i$ and $\mu$ are the model weight and the reward of a participant $i$, respectively, and $b_t$ is a budget constraint in one round model aggregation. $\mathtt{d}$ is to quantize the distance between $\omega_i$ and $\bar{\omega}$. Formally, $\mathtt{d}(\omega_i, \bar{\omega})=(\omega_i-\bar{\omega})^2$. $\epsilon$ is a control parameter that ensures $\mathtt{d}(\omega_i,\bar{\omega})+\epsilon\neq 0$ (In \textsc{CrowdFL}, $\epsilon=\frac{1}{10^L}$). From Eq. (10), we can see that a training model is closer to the global average model, the larger the model's weight is and the more the model's reward is. Besides, if $\omega_i=\omega_j$ and $\delta_i>\delta_j$, $\mu_i>\mu_j$.

To prevent privacy leakage during reward distribution, we design a privacy-preserving reward distribution mechanism (\texttt{PriRwd}) based on the proposed \texttt{SMUL}. The goal of \texttt{PriRwd} is to calculate Eq. (10) in the ciphertext field. \texttt{PriRwd} is shown in Algorithm \ref{pprd} whose brief description is given below.

\begin{algorithm}[htb] 
\caption{$\texttt{PriRwd}(\{\llbracket\omega_i\rrbracket,\llbracket\delta_i\rrbracket\}_{i=1}^n, \llbracket\bar{\omega}\rrbracket,\llbracket b_t\rrbracket)\!\rightarrow\!\{\llbracket\mu_i\rrbracket\}_{i=1}^n$} 
\label{pprd} 
\begin{algorithmic}[1]
	\REQUIRE SP has $\{\llbracket\omega_i\rrbracket,\llbracket\delta_i\rrbracket\}_{i=1}^n, \llbracket\bar{\omega}\rrbracket,\llbracket b_t\rrbracket$;
	\FOR{$i=1$ to $n$}
	\STATE SP calculates $\llbracket d_i\rrbracket=\llbracket \omega_i\rrbracket\cdot\llbracket \bar{\omega}\rrbracket^{N-1}$;
	\STATE SP and CSP jointly execute \texttt{SMUL} to obtain $\llbracket d_i^2\rrbracket\leftarrow\texttt{SMUL}(\llbracket d_i\rrbracket, \llbracket d_i\rrbracket)$;
	\STATE SP calculates $\llbracket d_i'^2\rrbracket\leftarrow\llbracket d_i^2\rrbracket\cdot\llbracket 1\rrbracket$, $\llbracket\pi\rrbracket\leftarrow\prod_{i=1}^{n}\llbracket\delta_i\rrbracket$, and $\llbracket\Omega\rrbracket\leftarrow\prod_{i=1}^{n}\llbracket d_i^2\rrbracket$;
	\STATE SP and CSP jointly run \texttt{SMUL} to calculate $\llbracket w_i^\downarrow\rrbracket\leftarrow\texttt{SMUL}(\llbracket\pi\rrbracket, \llbracket d_i'^2\rrbracket)$;
	\ENDFOR 
	\FOR{$i=1$ to $n$}
	\STATE SP and CSP jointly compute $\llbracket w_i^\uparrow\rrbracket\leftarrow\texttt{SMUL}(\llbracket\delta_i\rrbracket,\llbracket\Omega\rrbracket)$, $\llbracket w_i\rrbracket\leftarrow\texttt{SDIV}(\llbracket w_i^\uparrow\rrbracket,\llbracket w_i^\downarrow\rrbracket)$, and $\llbracket b_t\cdot w_i\rrbracket\leftarrow\texttt{SMUL}(\llbracket b_t\rrbracket, \llbracket w_i\rrbracket)$;
	\ENDFOR
	\STATE SP calculates $\llbracket\mathcal{W}\rrbracket\leftarrow(\prod_{i=1}^{n}\llbracket w_i\rrbracket)^{10^L}$;
	\FOR{$i=1$ to $n$}
	\STATE SP and CSP jointly execute \texttt{SDIV} to obtain $\llbracket\mu_i\rrbracket\leftarrow\texttt{SDIV}(\llbracket b_t\cdot w_i\rrbracket, \llbracket\mathcal{W}\rrbracket)$;
	\ENDFOR
\end{algorithmic}
\end{algorithm}
The first \textbf{for} loop (Lines 1-6) is to calculate the distance between $\omega_i$ and $\bar{\omega}$. Specifically, the ciphertexts of $\sum_{i=1}^{n}\mathtt{d}(\omega_i,\bar{\omega})$ and $(\sum_{i=1}^{n}\delta_i)\cdot(\mathtt{d}(\omega_i,\bar{\omega})+\epsilon)$ are calculated through the proposed \texttt{SMUL}, where $\epsilon$ is set as $10^{-L}$. The second \textbf{for} loop (Lines 7-9) is to calculate the ciphertexts of model weights $w_i$ and $b_t\cdot w_i$. Specifically, the algorithm calls the proposed \texttt{SDIV} and \texttt{SMUL} to compute $\llbracket w_i\rrbracket$ and $\llbracket b_t\cdot w_i\rrbracket$, respectively. The last \textbf{for} loop is to calculate $\llbracket\mu_i\rrbracket$ by using the proposed \texttt{SDIV}. 

After obtaining $\llbracket\mu_i\rrbracket$, SP can calculate $\llbracket\mu_{i,1}\rrbracket\leftarrow\texttt{PDec}(\llbracket\mu_i\rrbracket, \lambda_{sp})$ and sends $\langle\llbracket\mu_i\rrbracket, \llbracket\mu_{i,1}\rrbracket\rangle$ to the participant $i$. Once receiving $\langle\llbracket\mu_i\rrbracket, \llbracket\mu_{i,1}\rrbracket\rangle$, the participant $i$ can compute $\llbracket\mu_{i,2}\rrbracket\leftarrow\texttt{PDec}(\llbracket\mu_i\rrbracket, \lambda_p)$. Finally, the participant can obtain her reward $\mu_i$ by using the \texttt{TDec} to decrypt $\langle\llbracket\mu_{i,1}\rrbracket,\llbracket\mu_{i,2}\rrbracket\rangle$. From Algorithm \ref{pprd}, we can see that only the participant can obtain her reward $\mu_i$. Particularly, to prevent the requester from refusing to pay rewards for participants, the \texttt{PriRwd} requires the requester to pay rewards to SP in advance.

\section{Privacy Analysis}
In this section, we firstly demonstrate that our proposed \texttt{SDIV} and \texttt{SMUL} do not leak input data privacy. Then, we prove that the proposed \texttt{PriFedAvg} and \texttt{PriRwd} can protect model privacy and reward privacy.

\begin{theorem}
In \texttt{SDIV}, given $\llbracket x\rrbracket$ and $\llbracket y\rrbracket$, SP and CSP without collusion cannot learn $x$, $y$, and $x\div y$.
\end{theorem}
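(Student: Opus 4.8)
The plan is to argue security in the semi-honest (honest-but-curious) model via the simulation paradigm: for each of the two non-colluding parties I will construct a probabilistic polynomial-time simulator that reproduces that party's view in an execution of \texttt{SDIV} using only its own input and prescribed output. Since in \texttt{SDIV} neither SP nor CSP receives $x$, $y$, or $x\div y$ as input or as output, exhibiting such simulators — with transcripts that are, respectively, computationally and statistically indistinguishable from the real views — shows that neither party learns any of these three quantities, and the no-collusion hypothesis then guarantees the views cannot be combined to recover them.

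I would dispose of SP first, which is the easy case. SP's view consists of its inputs $\llbracket x\rrbracket,\llbracket y\rrbracket$, its own randomness $\alpha,e,r$ (and the derived $[\alpha],[e\div r]$), the partial decryptions $X_1,Y_1$ it produces, and the single ciphertext $\llbracket x'\div y'\rrbracket$ it receives from CSP; every object here other than SP's own coins is a PCTD ciphertext, and SP holds only the partial key $\lambda_{sc}$, which by the threshold property of PCTD cannot by itself decrypt anything. A simulator $\mathcal{S}_{\mathrm{SP}}$ therefore samples $\alpha,e,r$ exactly as prescribed and replaces $\llbracket x\rrbracket$, $\llbracket y\rrbracket$, and the incoming message by fresh encryptions of $0$; indistinguishability follows from a three-term hybrid argument based on the semantic security of PCTD. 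Hence SP learns nothing about $x$, $y$, or $x\div y$.

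Next I would treat CSP, the substantive case, since CSP is designed to decrypt (blinded) ciphertexts. Up to the ciphertexts it merely forwards, CSP's view is determined by the plaintexts it recovers: $y'=r\,y$, $x'=r\,x+(r\alpha+e)\,y$, and the rounded quotient $x'\div y'=x\div y+\alpha+e\div r$ (as verified when \texttt{SDIV} was introduced). I would show these leak nothing about $(x,y,x\div y)$ via three observations: (i) the additive mask $\alpha$, uniform with $\sigma$ bits, hides $x\div y\in(0,2^\kappa)$ up to statistical distance $O(2^{\kappa-\sigma})$, negligible for the chosen parameters ($\kappa=32$, $\sigma=80$); (ii) the factor $r$, uniform over an interval of length exceeding $2^{\log_2 N-\kappa-\sigma-2}\gg 2^{\kappa}$, multiplicatively blinds the $\kappa$-bit value $y$ in $y'=ry$, so that $y'$ alone is within negligible statistical distance of a distribution independent of $y$; (iii) the only extra information in $(x',y')$ beyond $y'$ is the quotient $x'\div y'$ already handled in (i). Together these justify a simulator $\mathcal{S}_{\mathrm{CSP}}$ that picks $\alpha,e,r$ honestly, fixes arbitrary dummy inputs $\hat x,\hat y$, forms $\hat y':=r\hat y$, $\hat x':=r\hat x+(r\alpha+e)\hat y$ and $\hat x'\div\hat y'$, and builds the ciphertexts and partial-decryption shares exactly as in the protocol (re-randomizing where needed, using CSP's own key $\lambda_c$ to form the shares consistent with the target plaintexts). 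Since $\mathcal{S}_{\mathrm{CSP}}$ never consults $x$, $y$, or $x\div y$, CSP learns none of them.

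The step I expect to be the main obstacle is observation (ii) together with the joint-distribution claim in (iii): multiplicative blinding over the integers is not perfectly hiding, so I must carefully bound the statistical distance between $r\,y$ for an unknown $\kappa$-bit $y$ and a uniform sample from the induced range, and then verify that revealing $y'=ry$ and the rounded $x'\div y'$ simultaneously creates no algebraic relation — for instance, via a common factor of $x'$ and $y'$, which is why $r$ is drawn as a random composite — that would pin down $x$ or $y$. I would also make explicit where no-collusion is used: SP is never given $\lambda_c$, so it only partially decrypts $X,Y$ and never obtains $x'$ or $y'$; CSP is never given the original ciphertexts $\llbracket x\rrbracket,\llbracket y\rrbracket$ (only the blinded $X,Y$) nor $\lambda_{sc}$ paired with the underlying plaintexts; consequently the two simulators' outputs cannot be merged into a view revealing $x$, $y$, or $x\div y$.
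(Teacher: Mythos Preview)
Your simulation-based framework is more formal than the paper's direct argument, and your treatment of SP (a hybrid over PCTD ciphertexts, with SP holding only the partial key $\lambda_{sc}$) matches the paper's appeal to semantic security. The gap is in the CSP case. Your observation (ii) asserts that $y'=ry$ is within negligible \emph{statistical} distance of a distribution independent of $y$; this is false over the integers. For any small prime $p$, the event $p\mid ry$ occurs with probability $1$ when $p\mid y$ and with probability $\Pr[p\mid r]<1$ otherwise, so $ry\bmod p$ leaks whether $p\mid y$. Concretely, $\{r\cdot 2:r\text{ as specified}\}$ and $\{r\cdot 3:r\text{ as specified}\}$ are separated by a parity test with constant advantage, and enlarging the range of $r$ does nothing to fix this. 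Your simulator $\mathcal{S}_{\mathrm{CSP}}$ with a fixed dummy $\hat y$ therefore fails whenever $y$ and $\hat y$ differ in their small prime factors, and the ``careful bound on statistical distance'' you anticipate in your last paragraph cannot exist.

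The paper handles this step \emph{computationally} rather than statistically: it argues that CSP cannot factor the large composite $ry$, and that even if CSP extracts some factor $q_i$ of $ry$, it cannot decide whether $q_i$ divides $r$ or $y$, precisely because $r$ is itself drawn as a random composite (the design choice you correctly flagged). For $rx+(r\alpha+e)y$ and $x\div y+\alpha+e\div r$ the paper then gives an additive one-time-pad style argument, since the masks have more bits than the quantities they hide; this part aligns with your observation (i). If you wish to keep the simulation framework, you must downgrade the indistinguishability in (ii) from statistical to computational, under an assumption equivalent to ``given $ry$ with $r$ a random composite in the prescribed range, no probabilistic polynomial-time adversary recovers nontrivial information about the $\kappa$-bit value $y$,'' which is essentially the informal claim the paper makes.
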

\begin{proof}
{\rm As the Paillier cryptosystem is semantic-security, given $\llbracket x\rrbracket$, $\llbracket y\rrbracket$, and $\llbracket x\div y\rrbracket$, SP without a private key $\lambda$ fails to obtain $x$, $y$, and $x\div y$.
	
	CSP can obtain the intermediate results of computations, i.e., $rx+(r\alpha+e)\cdot y$, $ry$, and $x\div y+\alpha+e\div y$. However, as there is no a known efficient, non-quantum integer factorization algorithm to factorize a sufficiently larger composite number, CSP cannot factorize $ry$ into a product of smaller integers. Formally, $\Pr[ry=q_1\cdot q_2\cdots q_m|ry]<\varepsilon$, where $\varepsilon$ is a negligible probability. Even though CSP can learn a factor $q_i$, he cannot determine whether $q_i$ is a factor of $r$ or that of $y$. In other words, even if CSP can learn $q_1, q_2,\cdots, q_m$, he fails to obtain the factorization of $y$, i.e., $y=q_1'\cdot q_2'\cdots q_n'$ ($q_i'\in\{q_1,q_2,\cdots,q_m\}$). Thus, given $ry$, CSP has no knowledge of $y$. Similarly, CSP cannot learn $r$. 
	
	Given $rx+(r\alpha+e)\cdot y$ and $x\div y+\alpha+e\div r$, as $(r\alpha+e)\cdot y$ and $\alpha+e\div r$ are random, and the binary length of $(r\alpha+e)\cdot y$ is larger than $rx$ as well as that of $\alpha+e\div r$ is lager than $x\div y$, $rx+(r\alpha+e)\cdot y$ and $x\div y+\alpha+e\div r$ are indistinguishable from random
	values. As a matter of fact, when $(r\alpha+e)\cdot y$ and $\alpha+e\div r$ are sufficiently larger, in the view of CSP, $rx+(r\alpha+e)\cdot y$ and $x\div y+\alpha+e\div r$ are encrypted by one-time pad. Hence, CSP has no knowledge of $rx$ and $x\div y$.
	
	Taken together, SP and CSP without collusion cannot learn $x$, $y$, and $\frac{x}{y}$.
}
\end{proof}

\begin{theorem}
In \texttt{SMUL}, given $\llbracket x\rrbracket$ and $\llbracket y\rrbracket$, SP and CSP without collusion cannot learn $x$, $y$, and $xy$.
\end{theorem}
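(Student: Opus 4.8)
The plan is to mirror the structure of the proof of Theorem~1: analyze the views of SP and CSP separately, then invoke the non-collusion assumption. First I would handle SP. Throughout \texttt{SMUL}, SP only ever holds Paillier ciphertexts — $\llbracket x\rrbracket$, $\llbracket y\rrbracket$, the blinded $X,Y$, the returned $\llbracket x'\cdot y'\rrbracket$, and the homomorphically derived $\llbracket x\cdot y\rrbracket$ — together with the partial key $\lambda_{sc}$ (and $\lambda_{sp}$). Since SP lacks the complementary partial key $\lambda_c$, a single \texttt{PDec} discloses nothing, so by the semantic security of PCTD (inherited from Paillier, as noted in the preliminaries) SP cannot distinguish any of these ciphertexts from encryptions of arbitrary values; hence SP learns nothing about $x$, $y$, or $xy$. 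The same observation covers the clean-up in Step~3: SP performs it entirely on ciphertexts, so it leaks nothing beyond what semantic security already rules out.

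Second I would analyze CSP. From Step~2, CSP recovers $x'=x+r_1$ and $y'=y+r_2$ in the clear (via \texttt{PDec} then \texttt{TDec}) and computes $x'y'=xy+r_2x+r_1y+r_1r_2$; these are its only plaintext views. Because $r_1,r_2$ are drawn uniformly with $\sigma$ bits while $x,y\in(-2^\kappa,2^\kappa)$ with $\sigma\gg\kappa$ (e.g. $\sigma=80$, $\kappa=32$), the masked values $x+r_1$ and $y+r_2$ are statistically indistinguishable from the distribution of $r_1$ and $r_2$ themselves, the statistical distance being $O(2^{\kappa-\sigma})$, i.e. negligible. Thus, in the spirit of a one-time pad, CSP's view of $(x',y')$ carries essentially no information about $x$ or $y$, and therefore none about $xy$; since $x'y'$ is a deterministic function of $(x',y')$, it adds nothing. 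Combining the two views: as SP and CSP do not collude, their transcripts cannot be merged, so neither party — nor their respective outputs — reveals $x$, $y$, or $xy$, which is the claim.

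The hard part will be making the statistical-hiding step rigorous rather than heuristic. One must (i) handle the fact that $x$ may be negative, so the support of $x+r_1$ is a shifted interval rather than $[0,2^\sigma)$, and (ii) give an explicit bound on the statistical distance between $\{x+r_1 : r_1 \leftarrow [0,2^\sigma)\}$ and the distribution of $r_1$, confirming it is $O(2^{\kappa-\sigma})$ \emph{uniformly} over all admissible $x$; the analogous bound for $y$ follows identically, and the independence of $r_1$ and $r_2$ lets the two be argued jointly so that no correlation between $x'$ and $y'$ leaks through. Everything else — SP's side, the homomorphic reconstruction of $\llbracket x\cdot y\rrbracket$, and the final combination — is then routine given semantic security and the non-collusion assumption.
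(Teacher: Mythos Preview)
Your proposal is correct and follows essentially the same route as the paper: treat SP's view via Paillier/PCTD semantic security, treat CSP's plaintext view $(x+r_1,\,y+r_2)$ as a one-time-pad style masking, then combine under non-collusion. In fact your plan is more careful than the paper's own argument, which simply asserts that for $\sigma$ large enough the masked values are indistinguishable from random and invokes one-time-pad perfect secrecy without the explicit statistical-distance bound or the treatment of negative $x$ that you flag.
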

\begin{proof}
{\rm As the Paillier cryptosystem is semantic-security, given $\llbracket x\rrbracket$, $\llbracket y\rrbracket$, and $\llbracket xy\rrbracket$, SP without private key $\lambda$ fails to learn $x$, $y$, and $xy$.
	
	CSP can obtain the intermediate computation results, i.e., $x+r_1$ and $y+r_2$. However, as long as $\sigma>\ell$, in the view of CSP, $x+r_1$ and $y+r_2$ are indistinguishable from random
	values. In essence, $x+r_1$ and $y+r_2$ are encrypted by one-time pad. One-time pad has been proved that is perfectly secret\cite{katz2020introduction}. Therefore, even though CSP can obtain $x+r_1$, $y+r_2$, and $(x+r_1)\cdot(y+r_2)$, he has no knowledge of $x$, $y$, and $xy$.
	
	Taken together, SP and CSP without collusion cannot learn $x$, $y$, and $xy$.
}
\end{proof}

\begin{theorem}
The proposed \texttt{PriFedAvg} does not leak a participant's training model to other entities.
\end{theorem}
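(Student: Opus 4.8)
The plan is to enumerate every entity that takes part in \texttt{PriFedAvg} --- SP, CSP, the participants, and the requester --- and show that each one's entire view consists either of semantically secure ciphertexts under $pk$, or of intermediate quantities that are already covered by Theorems~1--2, or of the single global average model $\bar\omega_j$, which I will argue is information-theoretically independent of any individual $\omega_i$. The ``other entities'' in the statement thus split into two kinds: those that only ever see undecryptable group elements (SP, CSP, requester), and the participants, who additionally see the decrypted averages. I would dispatch the first kind quickly and spend the real effort on the second.

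For SP and CSP: the homomorphic aggregation in Lines~3--6 of Algorithm~\ref{fedavg} operates purely on ciphertexts under $pk$, and by the PCTD properties neither SP (holding $\lambda_{sc},\lambda_{sp}$) nor CSP (holding $\lambda_c$) possesses more than one partial private key, so by semantic security of the Paillier cryptosystem neither can recover $\llbracket\delta_i\cdot\omega_i\rrbracket$, $\llbracket\delta_i\rrbracket$, $\mathcal{M}$, $\mathcal{D}$, or $\llbracket\bar\omega_j\rrbracket$. The only plaintext-level leakage occurs inside the \texttt{SDIV} call on Line~7, and applying Theorem~1 with $x=\sum_{i=1}^n\delta_i\omega_i$ and $y=\sum_{i=1}^n\delta_i$ shows that the quantities visible to CSP there are masked by one-time-pad-style noise and reveal nothing about $x$, $y$, or $x\div y$; the partial share $\llbracket\bar\omega_{j,1}\rrbracket\leftarrow\texttt{PDec}(\llbracket\bar\omega_j\rrbracket,\lambda_{sp})$ produced by SP on Line~11 is likewise an undecryptable element under the no-collusion assumption. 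The requester receives only the single ciphertext $\llbracket\bar\omega_T\rrbracket$, so semantic security again gives it no information about $\bar\omega_T$, let alone about any $\omega_i$.

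The crux is the curious participant, and I expect this to be the main obstacle. Participant $i$ legitimately learns its own $(\omega_i,\delta_i)$ and, via the \texttt{PDec}/\texttt{TDec} steps on Lines~11--12, the decrypted average $\bar\omega_j=\big(\sum_{t=1}^n\delta_t\omega_t^{(j)}\big)\big/\big(\sum_{t=1}^n\delta_t\big)$ in each round $j$. I would argue that, provided $n\ge 3$, the single scalar relation $\bar\omega_j\sum_t\delta_t-\delta_i\omega_i^{(j)}=\sum_{t\ne i}\delta_t\omega_t^{(j)}$ is an underdetermined linear equation in the $n-1\ge 2$ unknowns $\{\omega_t^{(j)}\}_{t\ne i}$, so no individual $\omega_t^{(j)}$ is pinned down --- this formalizes the remark already made after Algorithm~\ref{fedavg} that ``a participant $i$ that only knows $\omega_i,\delta_i$ fails to learn other participants model parameters.'' The delicate point, which I would treat carefully, is the multi-round accumulation: over $T$ iterations participant $i$ collects $T$ such equations, but the unknowns $\{\omega_t^{(j)}\}_{t\ne i}$ in round $j$ are fresh local-retraining outputs, so the combined system has $T$ equations in $T(n-1)$ unknowns and remains underdetermined; one should also note the degenerate case $n=2$ (where the scheme genuinely would leak) is excluded. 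Assembling the ciphertext arguments for SP, CSP, and the requester with this independence argument for participants yields that no entity learns any participant's training model, which is the claim.
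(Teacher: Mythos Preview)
Your proposal is correct and follows essentially the same entity-by-entity case analysis as the paper's own proof: invoke the security of \texttt{SDIV} (Theorem~1) for SP and CSP, then argue that the decrypted average $\bar\omega$ alone does not determine any single $\omega_i$. Your treatment is noticeably more careful than the paper's: you spell out the underdetermined-system reasoning for participants, treat the multi-round accumulation, flag the $n=2$ boundary, and note that the requester's view in Algorithm~\ref{fedavg} is only the ciphertext $\llbracket\bar\omega_T\rrbracket$ --- whereas the paper's proof simply asserts that a participant ``fails to learn $\sum_i\delta_i\omega_i$ and $\sum_i\delta_i$'' and that the requester, even granted $\bar\omega$, is in the same position.
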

\begin{proof}
{\rm As the proposed \texttt{SDIV} can prevent SP and CSP from learning input data and the calculation result, SP and CSP fail to obtain a participant's training model.
	
	Each participant can obtain $\bar{\omega}=\frac{\sum_{i=1}^{n}\delta_i\cdot \omega_i}{\sum_{i=1}^{n}\delta_i}$, however, as the participant fails to learn $\sum_{i=1}^{n}\delta_i\cdot \omega_i$ and $\sum_{i=1}^{n}\delta_i$, the participant cannot learn other participants' training models. Similarly, although a requester can get $\bar{\omega}$, the requester cannot learn participants' training models.
	
	Taken together, no entity except for the participant herself can learn the participant's training model.
}
\end{proof}

\begin{theorem}
The proposed \texttt{PriRwd} does not leak a participant reward.
\end{theorem}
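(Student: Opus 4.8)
The plan is to reduce the privacy of \texttt{PriRwd} to that of its building blocks, mirroring the way Theorem 3 reduces \texttt{PriFedAvg} to \texttt{SDIV}. Observe that Algorithm \ref{pprd} is assembled from only three kinds of operations: (i) local homomorphic manipulations performed by SP on ciphertexts under $pk$ (the subtraction $\llbracket d_i\rrbracket=\llbracket\omega_i\rrbracket\cdot\llbracket\bar{\omega}\rrbracket^{N-1}$, the products $\llbracket\pi\rrbracket,\llbracket\Omega\rrbracket,\llbracket\mathcal{W}\rrbracket$, and the scalings by $10^L$); (ii) joint invocations of \texttt{SMUL} and \texttt{SDIV}; and (iii) the final partial decryption $\llbracket\mu_{i,1}\rrbracket\leftarrow\texttt{PDec}(\llbracket\mu_i\rrbracket,\lambda_{sp})$ followed by delivery of $\langle\llbracket\mu_i\rrbracket,\llbracket\mu_{i,1}\rrbracket\rangle$ to participant $i$. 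I would show that none of these three steps leaks $\mu_i$ (nor the auxiliary quantities $w_i,d_i$) to SP, to CSP, to the requester, or to any participant $j\neq i$.

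First I would dispatch SP: every value SP holds or derives in Lines 1--14 is a ciphertext under $pk$, and SP owns only the partial key $\lambda_{sp}$, never $\lambda$; hence by the semantic security of PCTD (inherited from Paillier) SP learns nothing about the plaintexts, in particular nothing about $\mu_i$, and the share $\llbracket\mu_{i,1}\rrbracket$ it produces is, by construction of PCTD, not a valid decryption of $\llbracket\mu_i\rrbracket$ and is useless without the complementary share $\lambda_p$. For CSP, the only plaintext-level data it ever receives are the noised intermediate values passed inside the \texttt{SMUL} and \texttt{SDIV} subcalls; Theorems 1 and 2 already guarantee, under the no-collusion assumption between SP and CSP, that those subprotocols reveal neither their inputs nor their outputs, so CSP learns nothing about $w_i^\uparrow,w_i^\downarrow,b_t\cdot w_i,\mathcal{W}$ or $\mu_i$. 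A participant $j\neq i$ and the requester never receive the pair $\langle\llbracket\mu_i\rrbracket,\llbracket\mu_{i,1}\rrbracket\rangle$, and any ciphertext they might otherwise see is useless by semantic security; the requester additionally obtains only $\llbracket\bar{\omega}\rrbracket$ and has merely prepaid the public budget $b_t$, which is independent of the split $\{\mu_i\}$. Finally, participant $i$ combines $\llbracket\mu_{i,1}\rrbracket$ with $\llbracket\mu_{i,2}\rrbracket\leftarrow\texttt{PDec}(\llbracket\mu_i\rrbracket,\lambda_p)$ through \texttt{TDec}, recovering exactly the scalar $\mu_i$ and nothing else.

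The main obstacle is the composition step. Unlike Theorem 3, which invokes a single \texttt{SDIV}, \texttt{PriRwd} chains many \texttt{SMUL} and \texttt{SDIV} calls whose outputs feed later calls, so I must argue that the concatenated transcript seen by CSP (resp. SP) is still simulatable from what Theorems 1 and 2 give for each call in isolation. I would handle this with a standard hybrid/simulation argument: each subprotocol's transcript is, in CSP's view, indistinguishable from independent one-time-pad--masked values, and because SP draws fresh noise in every invocation the transcripts are mutually independent, so their concatenation remains simulatable; on SP's side everything stays encrypted under $pk$, so composition follows immediately from semantic security. A secondary, purely parametric point to check along the way is that the magnitude bounds required by \texttt{SDIV} (inputs in $(0,2^\kappa)$) and by \texttt{SMUL} still hold for $w_i^\uparrow,w_i^\downarrow,b_t\cdot w_i,\mathcal{W}$ after the $10^L$ scalings, which is a matter of choosing $\kappa$ and $L$ appropriately rather than a privacy concern.
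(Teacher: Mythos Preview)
Your proposal is correct and follows the same skeleton as the paper's proof: invoke Theorems~1 and~2 to conclude that SP and CSP learn nothing from the \texttt{SMUL}/\texttt{SDIV} calls, then observe that only participant~$i$ receives the pair $\langle\llbracket\mu_i\rrbracket,\llbracket\mu_{i,1}\rrbracket\rangle$ needed to recover $\mu_i$. The paper's argument is essentially those two sentences and nothing more.

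Where you go further is in the composition step and the parametric bound check. The paper simply asserts that Theorems~1 and~2 cover the whole transcript without addressing that \texttt{PriRwd} chains many subprotocol calls whose outputs feed later inputs; your hybrid argument (fresh noise in each call, hence independently simulatable transcripts for CSP; semantic security for SP) fills a gap the paper leaves implicit. Likewise, the paper never mentions the $(0,2^\kappa)$ range requirement for \texttt{SDIV} inputs after the $10^L$ scalings. So your route is not different in spirit, but it is a strictly more careful execution of the same reduction; what it buys is an actual simulation-based justification rather than a one-line appeal to the building blocks.
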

\begin{proof}
{\rm According to Theorem 1 and Theorem 2, the \texttt{SDIV} and the \texttt{SMUL} do not leak input data and calculation results. Thus, SP and CSP fail to obtain a participant reward.
	
	Besides, the participant is only given $\langle\llbracket\mu_i\rrbracket, \llbracket\mu_{i,1}\rrbracket\rangle$, hence, the participant only obtains her reward.
	
	Taken together, only the participant can learn her reward.
}
\end{proof}

\section{Performance Evaluation}
In this section, we evaluate the effectiveness and efficiency of the proposed \textsc{CrowdFL} from the views of theoretical analysis and experimental testing.

\textbf{Dataset.} Considering a practical MCS application that recognizes human activity through MCS\cite{lyu2017privacy}, we utilize Actitracker dataset\cite{lockhart2012applications} released by WISDM lab. The sensing data is collected by 36 participants using smartphones in their pockets. Moreover, participants are responsible for processing collected data in end-devices. In our experiments, we set 36 participants according to the Actitracker dataset, and each participant owns 13230 data records (each data record represents one acceleration in three spatial coordinates, and each activity consists of 270 data records). To evaluate the effectiveness, participants' sensing data is set as training data, which accounts for 70\% of the total experimental data. The remaining 30\% of data is used as test data. The distribution of six activities is shown in Table 2, and each activity depends on the acceleration of three spatial coordinates. As illustrated in Table 2, the training data is unbalanced.
\begin{table}
\label{data}
\centering
\resizebox{0.49\textwidth}{!}{
	\begin{threeparttable}
		\caption{The Distribution of Training Data}
		\begin{tabular}{cccccc}
			\toprule
			Jogging & Walking & Upstairs & Downstairs & Sitting & Standing \\ \midrule
			518   &   693   &   200    &    176     &   92    &    85    \\ \bottomrule
		\end{tabular}
	\end{threeparttable}
}
\end{table}

\textbf{Configuration.}  We implement \textsc{CrowdFL} in Java and adopt Deeplearning4j\footnote{https://deeplearning4j.org/} to train a three-layer neural network. We utilize a Windows 10 desktop, with an Intel(R) Core(TM) i7-8700 CPU @3.20GHz and 16.0GB RAM, to serve as SP and CSP. Moreover, we deploy an Android 10 smartphone, with HUAWEI Kirin 980 and 8.0GB RAM, to simulate 36 participants. We set the security parameter of the PCTD as $\zeta=1024$ and use a Shamir's Secret Sharing algorithm over GF(256) from Maven Repository to split training models. We set up parameters in \textsc{CrowdFL} as follows. The learning rate is $\eta=0.001$, the batch size is $B=50$, epochs are $E=30$, the number of input nodes, output nodes, and hidden nodes is $270$, $6$, and $1000$, respectively.
\subsection{Effectiveness Analysis}
In this section, we demonstrate the proposed \textsc{CrowdFL} is effective. The \textsc{CrowdFL} consists of the \texttt{PriFedAvg}, strategies against participant dropouts, and the \texttt{PriRwd}. Arguably, if all of the components of the \textsc{CrowdFL} are effective, the \textsc{CrowdFL} can be considered as effective.

To demonstrate the effectiveness of the \texttt{PriFedAvg}, we firstly compare the \texttt{PriFedAvg}, non-federated neural network (NN), and \texttt{FedAvg} from accuracy, precision, recall, and F1 Score. The comparison results are shown in Table 3. Table 3 shows that the difference between \texttt{FedAvg} and NN reduces through multiple interactions. When the number of iterations reaches a certain number (e.g., 5), the accuracy difference is lower than 0.01, and the difference in the recall is lower than 0.03. The difference between the MAD (Mean Absolute Deviation) of \texttt{FedAvg}\_5 and that of NN is less than 0.03. Moreover, from Table 3, it can be observed that our proposed \texttt{PriFedAvg} has the same results in accuracy, precision, recall, and F1 Score with the \texttt{FedAvg} when the \texttt{PriFedAvg} is iterated five times. Thus, we can conclude that the \texttt{PriFedAvg} is as effective as the \texttt{FedAvg}.
\begin{table}
\centering
\resizebox{0.49\textwidth}{!}{
	\begin{threeparttable}
		\caption{Comparison between NN and CrowdFL}
		\begin{tabular}{lccccc}
			\toprule
			Schemes            &    Accuracy     &    Precision    &     Recall      &    F1 Score     & MAD    \\ \midrule
			NN                 &     0.7778      &     0.6891      &     0.6961      &     0.6825      & --     \\
			\texttt{FedAvg}\_1 &     0.6481      &     0.6358      &      0.574      &     0.6144      & 0.0933 \\
			\texttt{FedAvg}\_2 &     0.7553      &     0.6497      &      0.642      &     0.5733      & 0.0563 \\
			\texttt{FedAvg}\_3 &     0.7619      &     0.6311      &     0.6605      &     0.6077      & 0.0461 \\
			\texttt{FedAvg}\_4 &     0.7725      &     0.6335      &     0.6653      &     0.6186      & 0.0389 \\
			\texttt{FedAvg}\_5 & \textbf{0.7751} & \textbf{0.6446} & \textbf{0.6757} & \textbf{0.6390} & \textbf{0.0278} \\
			\texttt{PriFedAvg}   & \textbf{0.7751} & \textbf{0.6446} & \textbf{0.6757} & \textbf{0.6390} & \textbf{0.0278} \\ \bottomrule
		\end{tabular}
		\begin{tablenotes}
			\small
			\item \textbf{Note.} MAD: Mean Absolute Deviation. The subscript number of \texttt{FedAvg} is the number of iterations. \texttt{PriFedAvg} iterates five times.
		\end{tablenotes}
\end{threeparttable}}
\end{table}

Figs. 3(a) and 3(b) show the results of comparison of accuracy and F1 score among NN, \textsc{CrowdFL-D} (\textsc{CrowdFL} with a discard strategy), and \textsc{CrowdFL-R} (\textsc{CrowdFL} with a retransmission strategy) in different dropout rates. Particularly, we assume that training models of 50\% dropped participants are retransmitted in \textsc{CrowdFL-R}. As depicted in Fig. 3, we can learn that \textsc{CrowdFL} has better performance in accuracy and F1 score than non-federated NN. Moreover, as dropped participants' training models in \textsc{CrowdFL-R} are retransmitted, \textsc{CrowdFL-R} is generally more accurate than \textsc{CrowdFL-D}, and has a larger F1 score than \textsc{CrowdFL-D}. According to the results of Fig. 3, it is not difficult to conclude that the proposed strategies against participant dropouts in \textsc{CrowdFL} are feasible. Furthermore, the \textsc{CrowdFL-R} might be more robust against dropped participants than the \textsc{CrowdFL-D}.
\begin{figure}
\centering
\subfigure[Comparison of accuracy]{
	\includegraphics[width=0.445\linewidth]{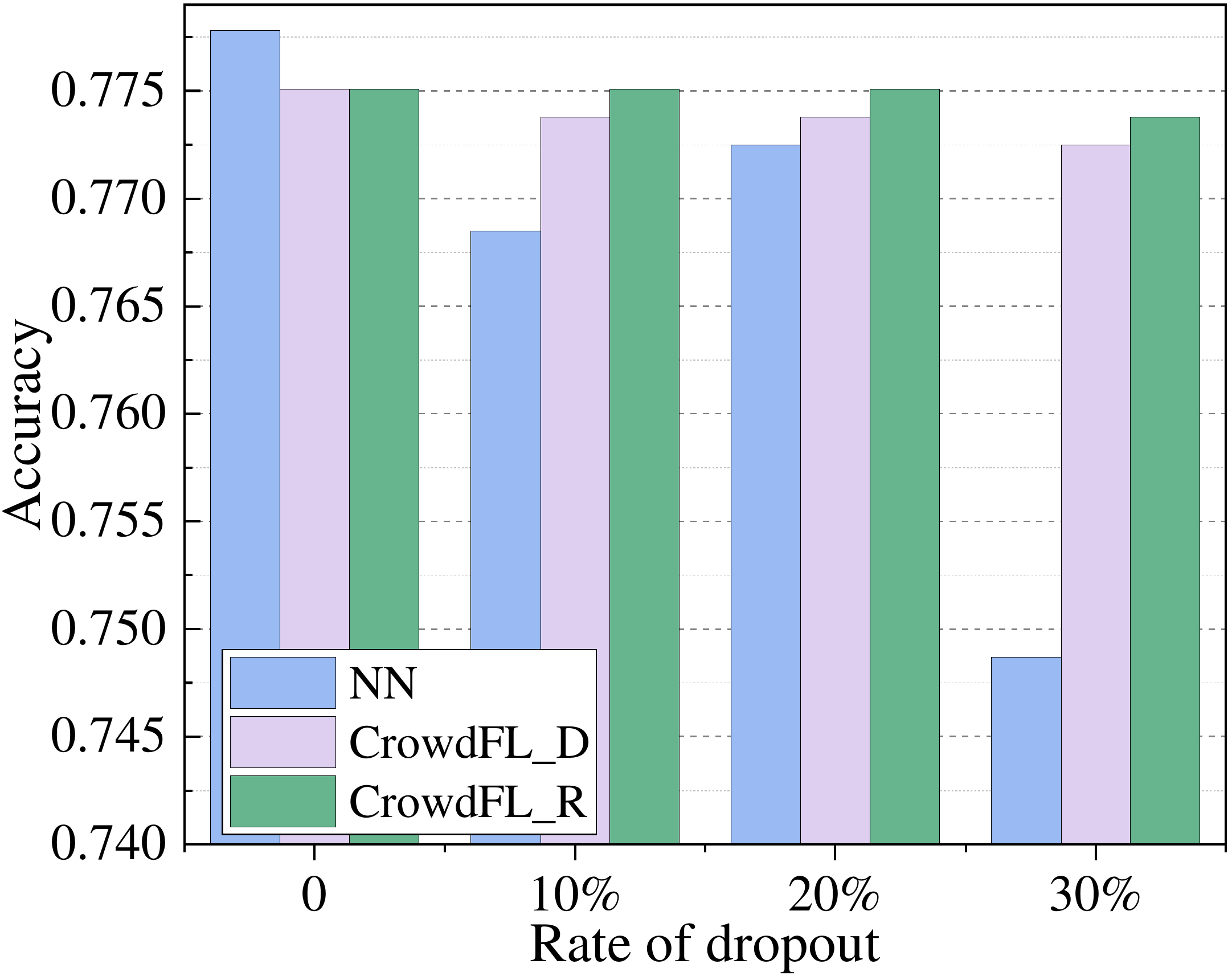}}
\hspace{0.1in}
\subfigure[Comparison of F1 score]{
	\includegraphics[width=0.44\linewidth]{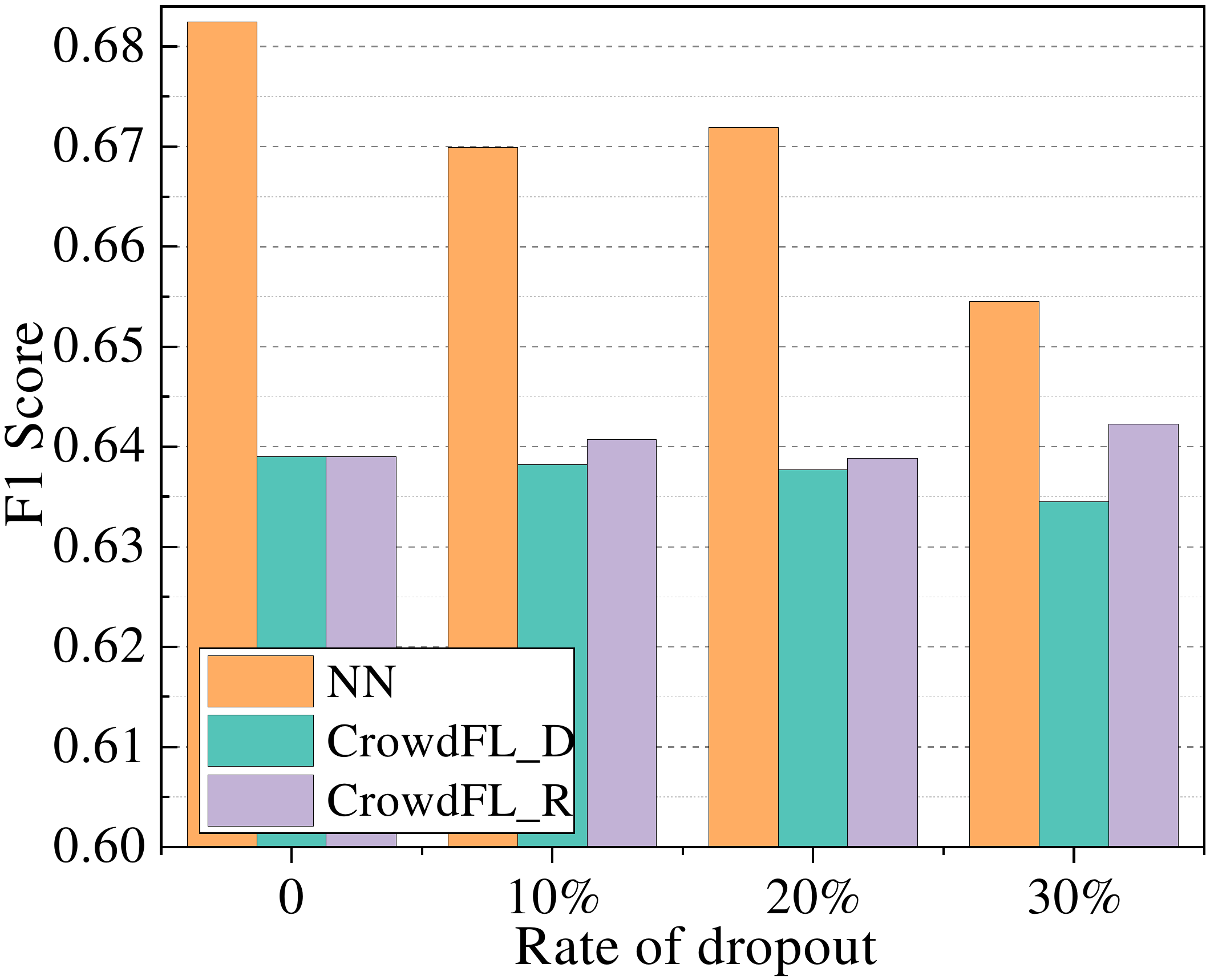}}
\caption{Comparison of accuracy and F1 score in different dropout rates.}
\end{figure}

If an average model is considered as an effective model, a participant's training model is closer to the average model and is more efficient. In our proposed reward distribution mechanism, if a participant's training model is closer to the average model (i.e., $\mathtt{d}(\omega_i,\bar{\omega})$ is less), the participant's model weight (i.e., $w_i$) is larger and receives more rewards (i.e., $\mu_i$). Fig. 4 shows experimental results of reward distributions. Particularly, we randomly choose the second iteration and the fourth iteration to illustrate our proposed reward distribution mechanism's feasibility, and the incentive budget of each iteration is set as $b_2=b_4=\$36$. According to Figs. 4(a) and 4(b), we can see that the less $\mathtt{d}(\omega_i,\bar{\omega})$ is, the larger $w_i$ is. Besides, multiple iterations can reduce $\mathtt{d}(\omega_i,\bar{\omega})$ meaning that a training model is closer to an average model, the training model is more effective. From Figs. 4(b) and 4(c), we can learn that $\mu_i$ is positively correlated with $w_i$. In other words, the larger $w_i$ is, the more $\mu_i$ is. Taken together, it can be concluded that the proposed reward distribution mechanism can motivate participants to train a more effective model.
\begin{figure}
\centering
\subfigure[$w_i$ vs $\mathtt{d}(\omega_i,\bar{\omega})$ in the second iteration]{
	\includegraphics[width=0.45\linewidth]{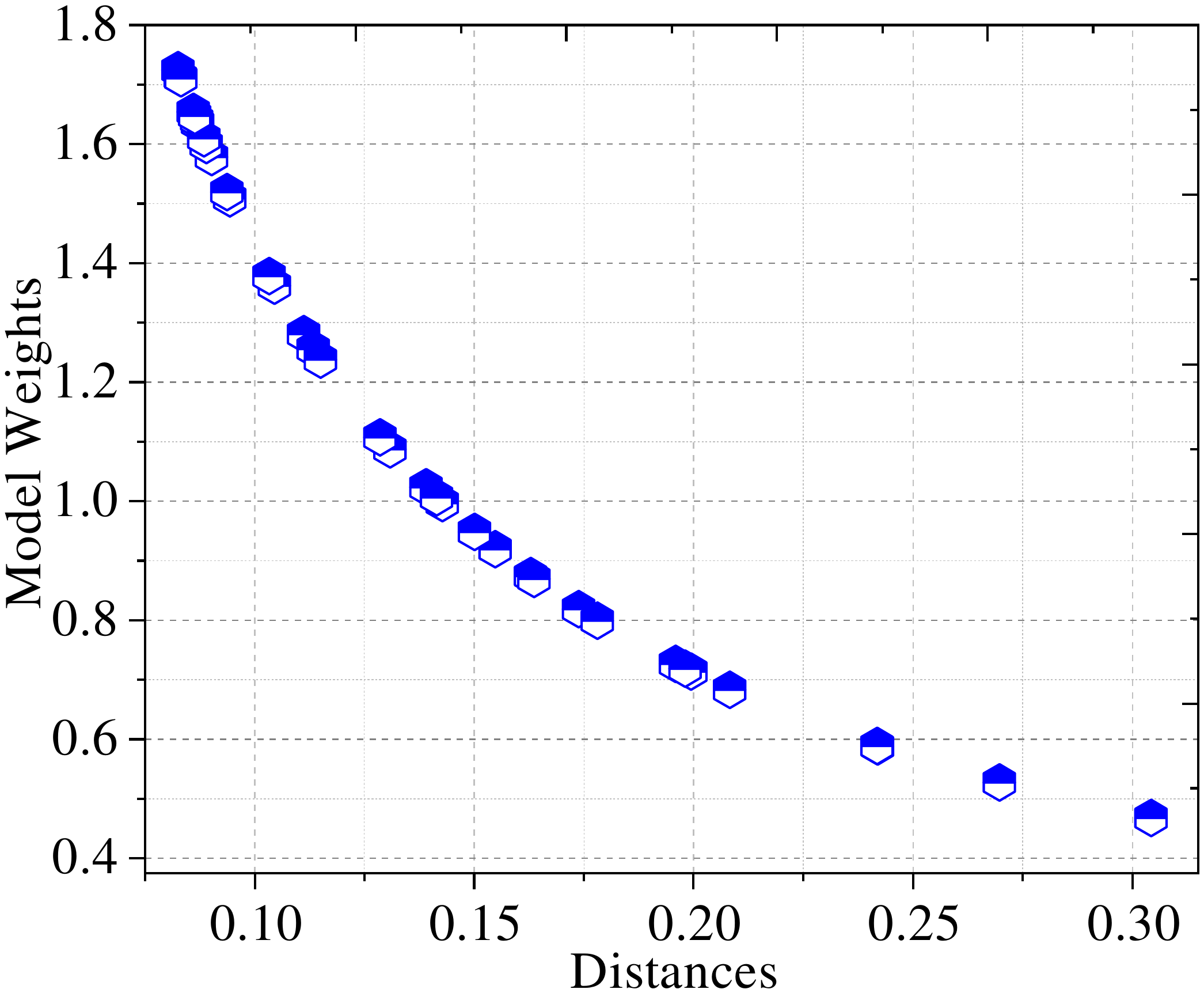}}
\hspace{0.1in}
\subfigure[$\mu_i$ vs $w_i$ in the second iteration]{
	\includegraphics[width=0.45\linewidth]{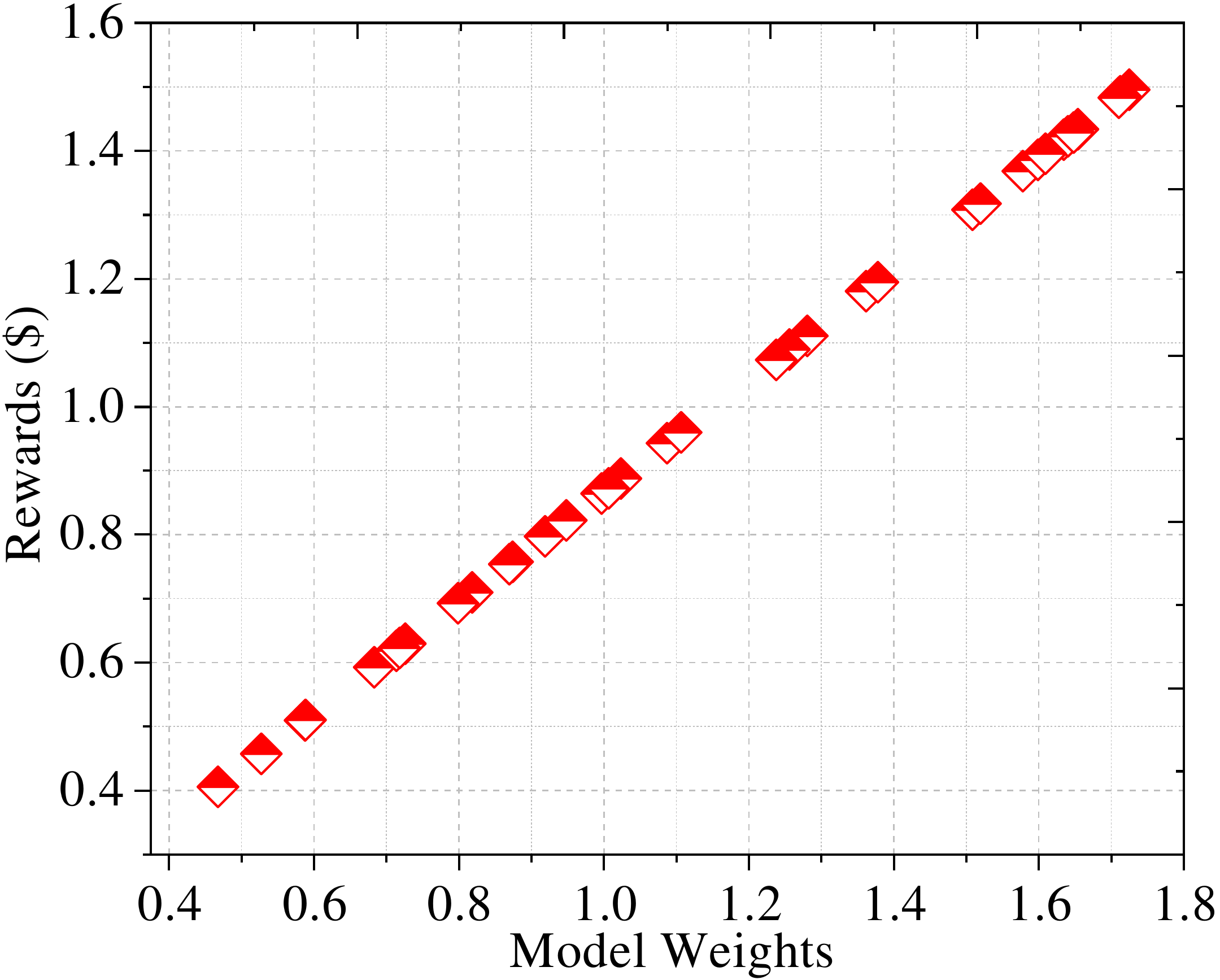}}
\hspace{0.125in}
\subfigure[$w_i$ vs $\mathtt{d}(\omega_i,\bar{\omega})$ in the fourth iteration]{
	\includegraphics[width=0.45\linewidth]{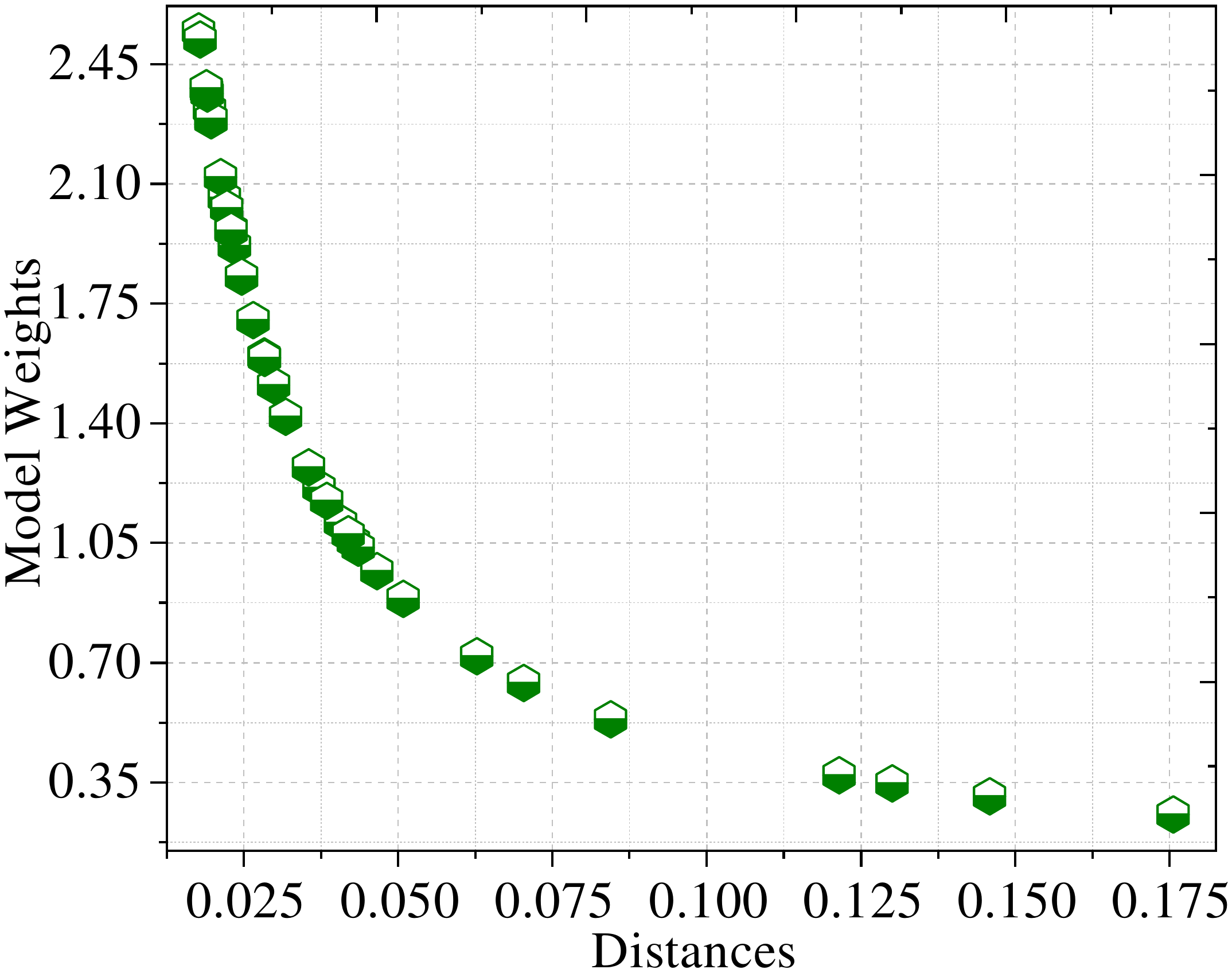}}
\hspace{0.125in}
\subfigure[$\mu_i$ vs $w_i$ in the fourth iteration]{
	\includegraphics[width=0.45\linewidth]{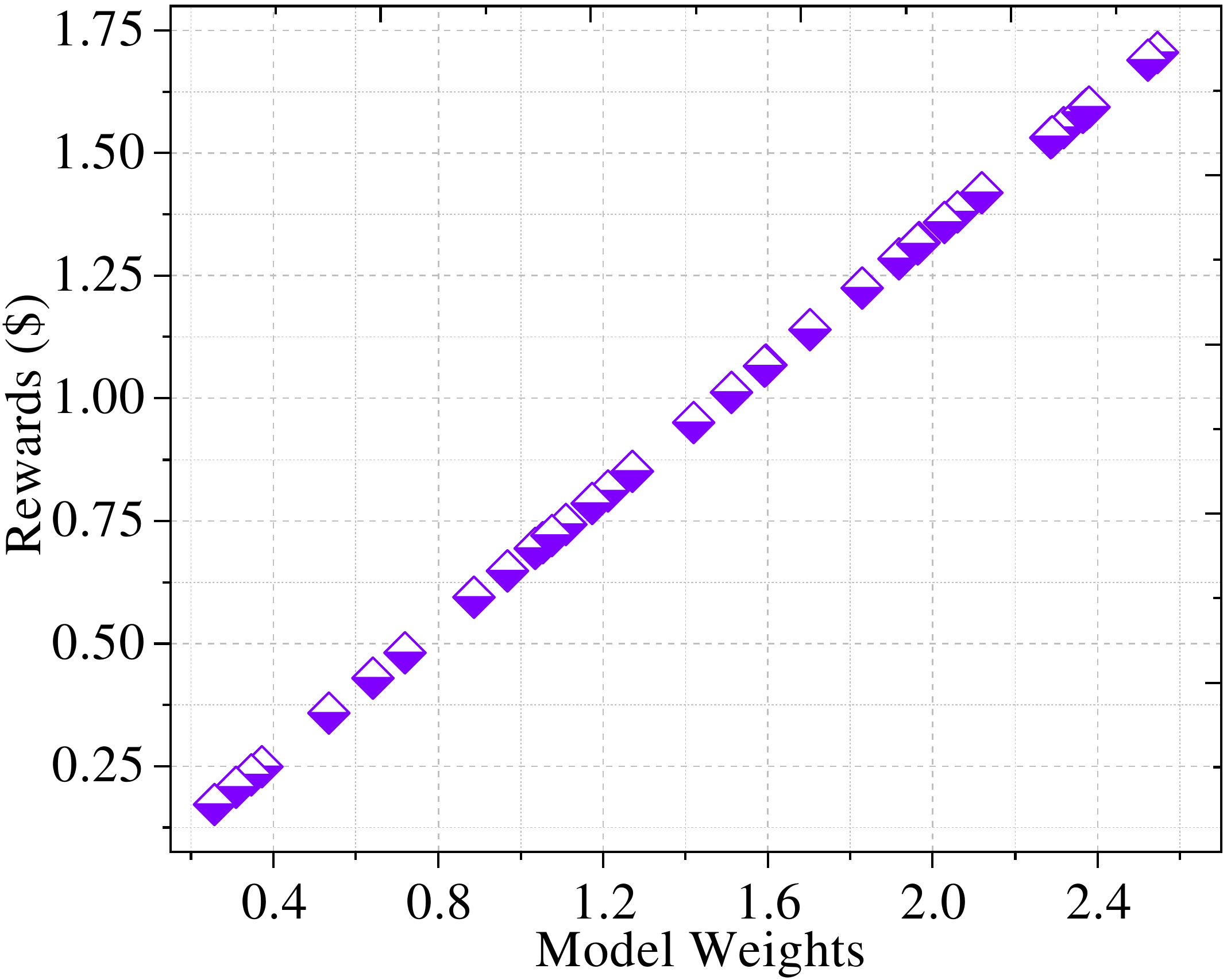}}
\caption{Incentive mechanism.}
\end{figure}

\subsection{Efficiency Analysis}
\textit{1) Theoretically Analysis:} Table 4 shows the comparison between \textsc{CrowdFL} and PPFDL\cite{xu2020privacy}. Compared to PPFDL, our proposed \textsc{CrowdFL} provides privacy-preserving federate average (\texttt{PriFedAvg}) and privacy-preserving reward distribution (\texttt{PriRwd}). Similar to PPFDL, the \textsc{CrowdFL} adopts two servers to support privacy-preserving training models aggregation, but the difference between PPFDL and \textsc{CrowdFL} is that any server in \textsc{CrowdFL} does not has a private key of a requester. From Algorithm \ref{fedavg}, we can see that the proposed \texttt{PriFedAvg} requires SP to execute $2n-2$ times \texttt{PAdd} to obtain $\llbracket\sum_{i=1}^{n}\delta_i\cdot \omega_i\rrbracket$ and $\llbracket\sum_{i=1}^{n}\delta_i\rrbracket$. After that, SP and CSP jointly perform the proposed \texttt{SDIV} to obtain $\llbracket\frac{\sum_{i=1}^{n}\delta_i\cdot \omega_i}{\sum_{i=1}^{n}\delta_i}\rrbracket$. Although the computation cost of a participant in \texttt{PriFedAvg} is lightly larger than that of a participant in PPFDL, the \texttt{PriFedAvg} avoids executing secure multiplication protocols between SP and CSP, which reduces computation costs of SP and CSP. Besides, PPFDL needs perform the \texttt{SecDiv} for multiple times, while the \texttt{PriFedAvg} only requires to execute the \texttt{SDIV} for once. The \texttt{SecDiv} requires not only needs two servers, but also two servers to build a garbled circuit. Contrasted to the \texttt{SecDiv}, the \texttt{SDIV} only requires two servers to execute secure computations. Thus, it can be said that our proposed \texttt{SDIV} is more efficient. From Table 4, we can observe that the computation cost of \texttt{PriFedAvg} and \texttt{PriRwd} is even lower that of PPFDL. One possible explanation is that PPFDL requires to multiple iterations to generate an average model.
\begin{table*}
\newcommand{\tabincell}[2]{\begin{tabular}{@{}#1@{}}#2\end{tabular}}
\centering
\resizebox{\textwidth}{!}{%
	\begin{threeparttable}
		\caption{Comparison of Computation Overhead between \textsc{CrowdFL} and PPFDL}
		\begin{tabular}{@{}cccc@{}}
			\toprule
			\multirow{2}{*}{Scheme} & \multicolumn{3}{c}{Computation cost}  \\ \cmidrule(l){2-4} 
			& SP & CSP & Participants      \\ \midrule
			PPFDL\cite{xu2020privacy}                   & \tabincell{c}{$\mathcal{K}((4n-3)\cdot\texttt{PAdd}+2n\texttt{SMUL}+(n+1)\cdot\texttt{SecDiv}$\\$+n\texttt{PMul})+(n-1)\texttt{PAdd}+\texttt{PMul}$}   &  $\mathcal{K}(2n\texttt{SMUL}+(n+1)\cdot\texttt{SecDiv})$   &        $\texttt{Enc}+\texttt{Dec}$           \\ \midrule
			\texttt{PriFedAvg}               & $(2n-2)\texttt{PAdd}+\texttt{SDIV}$  & \texttt{SDIV}   & $2\texttt{Enc}+\texttt{PMul}+\texttt{PDec}+\texttt{TDec}$                \\ \midrule
			\texttt{PriRwd}                 &  $(4n-2)\cdot\texttt{PAdd}+n\texttt{PMul}+4n\texttt{SMUL}+2n\texttt{SDIV}$  &   $4n\texttt{SMUL}+2n\texttt{SDIV}$  &        \texttt{Enc}           \\ \bottomrule
		\end{tabular}%
		\begin{tablenotes}
			\small
			\item \textbf{Note.} $n$ is the number of participants. \texttt{PAdd} and \texttt{PMul} denotes one Paillier homomorphic addition (i.e., $\llbracket x+y\rrbracket\leftarrow\llbracket x\rrbracket\cdot\llbracket y\rrbracket$) and one Paillier homomorphic multiplication (i.e., $\llbracket xy\rrbracket\leftarrow\llbracket y\rrbracket^x$), respectively. \texttt{SecDiv} is a secure multiplication protocol proposed by PPFDL\cite{xu2020privacy}. $\mathcal{K}$ denotes the iteration times of PPFDL.
		\end{tablenotes}
	\end{threeparttable}
}
\end{table*}

Suppose the binary length of a ciphertext encrypted by the Paillier cryptosystem be $\texttt{L}$, Table 5 compares communication overhead and communication round between \textsc{CrowdFL} and PPFDL. Comparing PPFDL with \texttt{PriFedAvg}, it is easy to see that the communication cost and communication round of two servers in \textsc{CrowdFL} are less than that of two servers in PPFDL. Note that the communication cost of a participant in \textsc{CrowdFL} is slightly higher than that of a participant in PPFDL. One possible explanation is that the participant in \textsc{CrowdFL} is responsible for computing $\llbracket\delta_i\cdot\omega_i\rrbracket$ instead of SP and CSP. The participant who knows $\delta_i$ and $\omega_i$ has less computation cost to obtain $\llbracket\delta_i\cdot\omega_i\rrbracket$ than SP only knowing $\llbracket\delta_i\rrbracket$ and $\llbracket\omega_i\rrbracket$. During reward distribution (see Algorithm \ref{pprd}), as \texttt{PriRwd} needs to execute secure computation protocols multiple times, communication cost and communication round of SP and CSP increase. Even so, when $\mathcal{K}>3$, the communication cost and communication round of PPFDL are larger than that of \textsc{CrowdFL}.
\begin{table*}[]
\newcommand{\tabincell}[2]{\begin{tabular}{@{}#1@{}}#2\end{tabular}}
\centering
\resizebox{\textwidth}{!}{%
	\begin{threeparttable}
		\caption{Comparison of Communication Overhead between \textsc{CrowdFL} and PPFDL}
		\begin{tabular}{@{}ccccccc@{}}
			\toprule
			\multirow{2}{*}{Scheme} &                 \multicolumn{3}{c}{Communication cost}                  & \multicolumn{3}{c}{Communication round} \\
			\cmidrule(l){2-7}    &         SP         &      CSP      &  \multicolumn{1}{c|}{Participant}  & SP & CSP &         Participants         \\ \midrule
			PPFDL\cite{xu2020privacy}          &    $\mathcal{K}((11n+3)\texttt{L}+(n+1)(\|\mathtt{GC}\|+5\texttt{h}))$                &       $\mathcal{K}((9n+3)\texttt{L}+(n+1)(\|\mathtt{GC}\|+5\texttt{h}))$        &       \multicolumn{1}{c|}{$2\texttt{L}$}       &  $\mathcal{K}(10n+4)$  &  $\mathcal{K}(8n+4)$   &       2                       \\ \midrule
			\texttt{PriFedAvg}    & $(4n+5)\texttt{L}$ & $5\texttt{L}$ & \multicolumn{1}{c|}{$4\texttt{L}$} & $2n+2$  &  2  &              2               \\ \midrule
			\texttt{PriRwd}     &      $31n\texttt{L}$              &    $30n\texttt{L}$           &       \multicolumn{1}{c|}{\texttt{L}}        & $13n$   &  $12n$   & $1$                             \\ \bottomrule
		\end{tabular}%
		\begin{tablenotes}
			\small
			\item \textbf{Note.} $n$ is the number of participants. \texttt{L} is the binary length of a ciphertext encrypted by the Paillier cryptosystem. \texttt{h} is the binary length of garbled values in the Garbled Circuits. $\|\mathtt{GC}\|$ indicates the binary length of the Garbled Circuits that takes $x+h_x$, $y+h_y$, $r$, $h_x$, and $h_y$ as inputs, and outputs $x/y-r$.
		\end{tablenotes}
	\end{threeparttable}
}
\end{table*}

To keep robustness against participant dropouts, traditional solutions usually employ secret sharing, such as Shamir secret sharing, which is able to recover a secret through multiple shares. In essence, it is to recover the secret through redundant information. Formally speaking, suppose a secret $s$, to recover $s$, $s$ is firstly split multiple shares (i.e., $n$ shares). Then, $s$ can be recovered by $t$ of $n$ shares (in this paper, we set $(t, n)=(3,5)$). Thus, the computation overhead of PPFL\cite{bonawitz2017practical} that adopts secret sharing to restrain dropped participants, is to execute one secret sharing. Besides, PPFL\cite{bonawitz2017practical} requires a participant to communicate with $n$ other participants and share shares with the server to recover a model. In contrast to PPFL, a dropped participant's training model in the proposed \textsc{CrowdFL} is either discarded or retransmitted. The participant does not need to execute extra computation. The sensing platform requires to perform twice homomorphic addition and one \texttt{SDIV}. When \textsc{CrowdFL} utilizes the discard strategy, there is no additional communication between a participant and the requester. When the retransmission strategy is adopted, a participant needs to resubmit her training model to SP. Meanwhile, if SP receives the dropped participant's training model before sending the encrypted global average model to participants, there is still no additional communication for the SP. The above analyses are listed in Table 6.
\begin{table}
\centering
\renewcommand\arraystretch{1.15}
\newcommand{\tabincell}[2]{\begin{tabular}{@{}#1@{}}
		#2
\end{tabular}}
\begin{threeparttable}
	\caption{Comparison of Computation and Communication Overhead against Participants Dropout}
	\begin{tabular}{lc|ccc}
		\toprule
		\multirow{2}{*}{Schemes}           &                 \multirow{2}{*}{Computation overhead}                            & \multicolumn{3}{c}{Communications round} \\ \cline{3-5}
		& & Participant & SP        & CSP        \\ \midrule
		\textsc{CrowdFL-D} & -- & 0 & 0 & 0\\
		\textsc{CrowdFL-R} & 12.3 \texttt{S} & 1 & 2 & 2\\
		PPFL\cite{bonawitz2017practical} & 3.4 \texttt{S} & $m+1$ & $t$ & --\\\bottomrule
	\end{tabular}
\end{threeparttable}
\end{table}

\textbf{Discussion.} To restrain a participant dropout, it is unadvisable to require the participant to build more connections are unadvisable. Firstly, if a participant may drop out, the participant is likely to fail to establish multiple connections with other participants due to an unstable connection. Besides, if a participant can maintain multiple stable connections with other participants, it is more feasible for the participant to establish one stable connection with SP. Furthermore, the latter requires less computation and communication overhead. 

\textit{2) Experimental Results:} Next, we give experimental results to illustrate the efficiency of \textsc{CrowdFL}. In the above experiment setting, time comparisons of the training model and encrypting model between \textsc{CrowdFL} and PPFDL are shown in Fig. 5, where the suffix ``\_I" means the initial training, and the suffix ``\_A" indicates the training again given an average model. Fig. 5(a) shows that the time of initial training is less than that of training again. One possible explanation is that the latter requires to load the average model. Besides, for the same model parameters and training machine, \textsc{CrowdFL} and PPFDL have the same training time. As depicted in Fig. 5(b), we can observe that encrypting model time of \textsc{CrowdFL} is almost the same as that of PPFDL. Compared to PPFDL, \textsc{CrowdFL} only requires to extra encrypt $\delta_i$, i.e., $\llbracket\delta_i\rrbracket$. The amount of extra encryption is much less the number of training model parameters to be encrypted. Thus, as the Paillier cryptosystem is efficient, the increased encryption time of \textsc{CrowdFL} is almost negligible.
\begin{figure}
\centering
\subfigure[Training model time]{
	\includegraphics[width=0.45\linewidth]{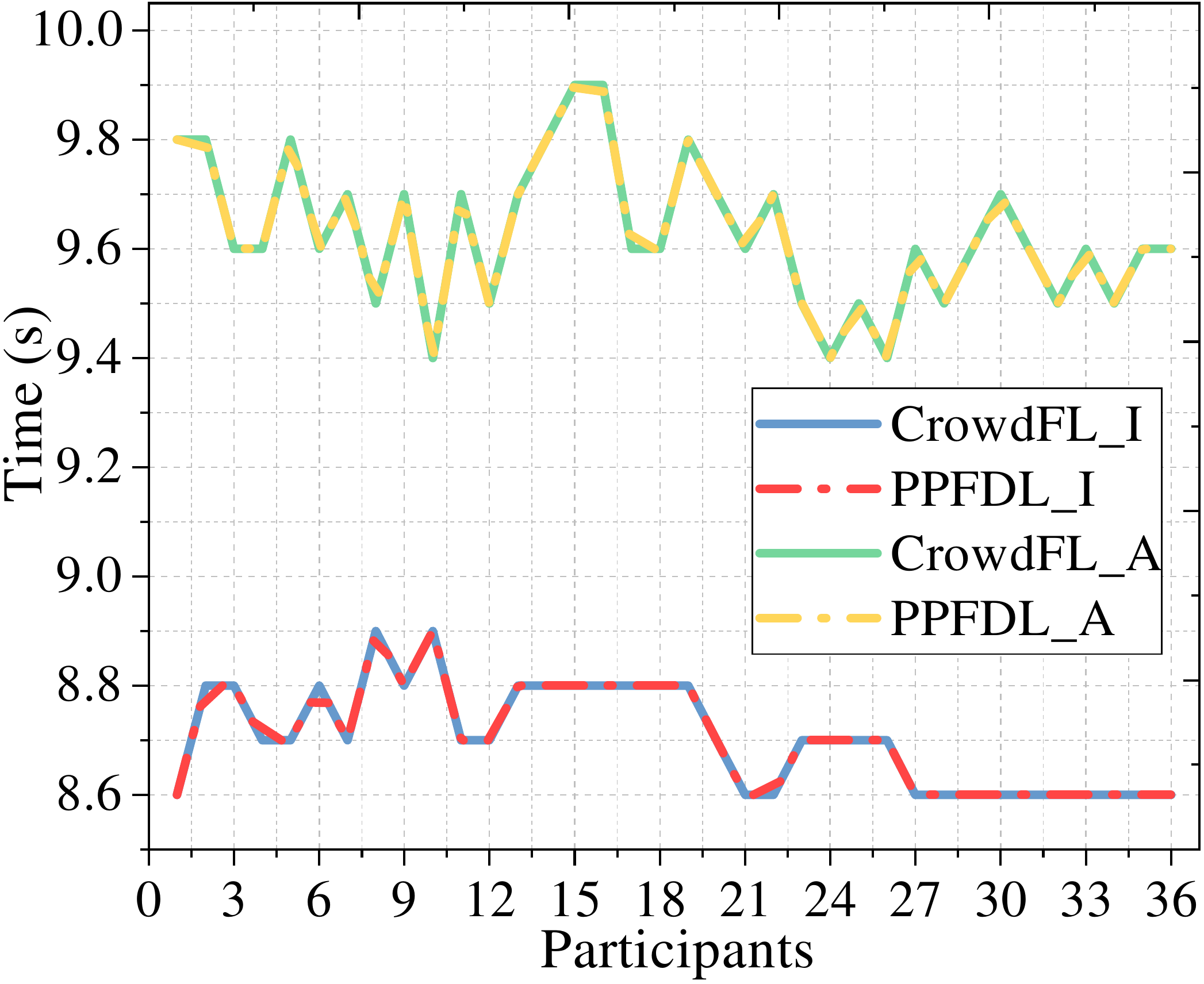}}
\hspace{0.05in}
\subfigure[Encrypting model time]{
	\includegraphics[width=0.45\linewidth]{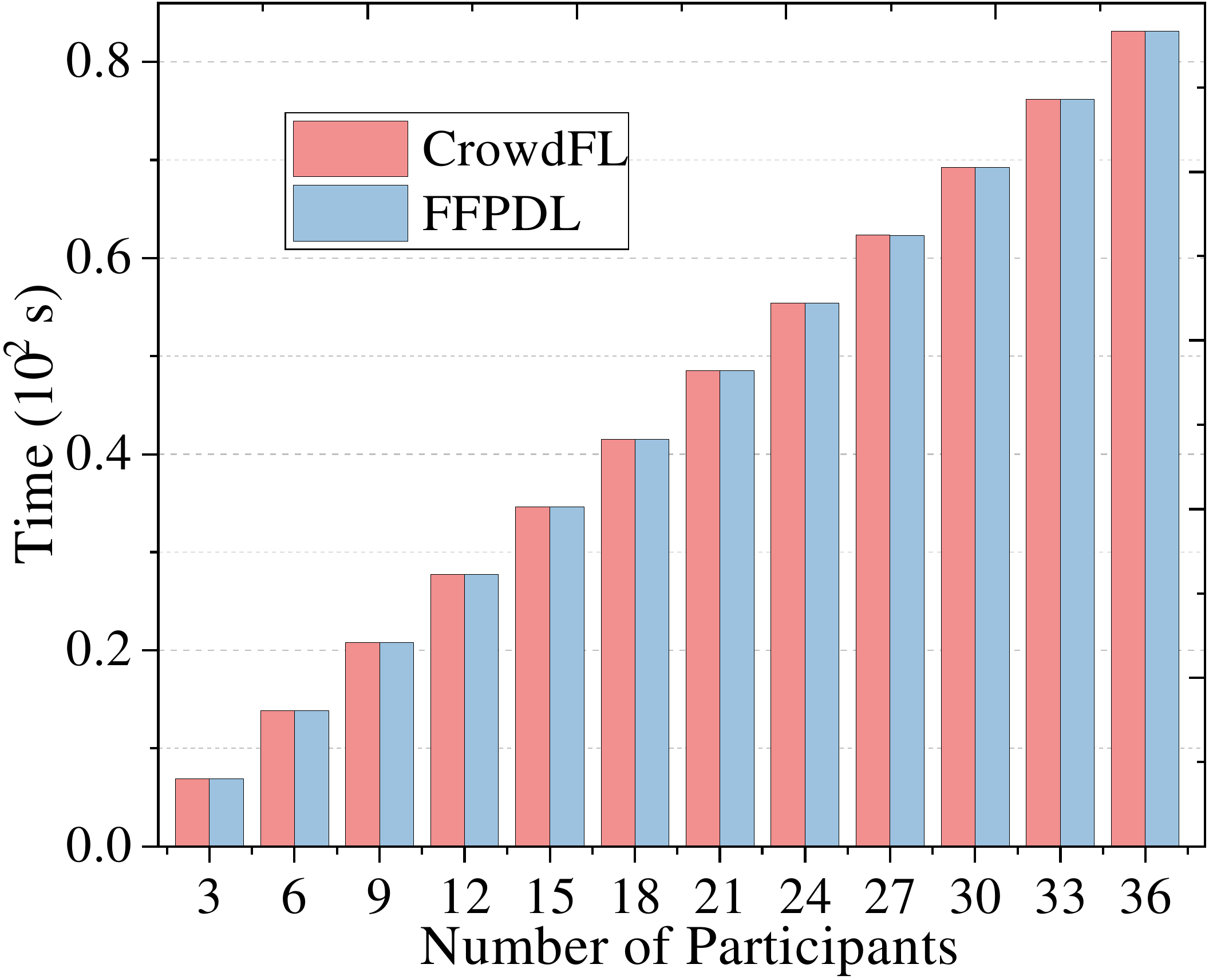}}
\caption{Comparisons of training model time and encrypting model time.}
\end{figure}

As depicted in Fig. 6, we can see comparison results of averaging model time and decrypting model time between \textsc{CrowdFL} and PPFDL\cite{xu2020privacy}. To be fair in comparison, we set $\mathcal{K}=1$ and use \texttt{SDIV} to instead of \texttt{SecDiv}. From Fig. 6(a), it is easy to observe that the averaging model time of \textsc{CrowdFL} is far less than that of PPFDL, which is consistent with the results of theoretical analysis in Table 4. Unlike the \texttt{FedAvg}, PPFDL requires to calculate participants' weights before averaging training models, which increases the computational cost of averaging models. Thus, it can be said that \textsc{CrowdFL} is more efficient than PPFDL to average training models. Fig. 6(b) shows decrypting average model time, where ``\textsc{CrowdFL}\_small" denotes that the participant takes $\lambda_{p}=2$ as the decryption parameter to decrypt the average model. From Fig. 6(b), we can see that if $\lambda_{sp}\in[1, \lambda u]$ and $\lambda_{p}=\varepsilon-\lambda_{sp}$, the decrypting model time of \textsc{CrowdFL} is about double that of PPFDL. One possible explanation is that the binary length of $\lambda_{p}$ is double that of $\lambda$. Although \textsc{CrowdFL} requires the participant to decrypt the average model through using \texttt{PDec} and \texttt{TDec}, the computation cost of \texttt{PDec} and \texttt{TDec} is same with \texttt{Dec}, i.e., executing one \texttt{PMul}, one multiplication, one subtraction, and one division. In \textsc{CrowdFL}, to protect privacy, the participant does not own the private key, which prevents one participant from filching other participants' encrypted models. According to PCTD, we can learn that as long as $\varepsilon=\lambda_{p}+\lambda_{sp}$, the participant can decrypt the average model. Thus, we can set a small $\lambda_{sp}$ (i.e., $\lambda_{sp}=2$) to reduce the computation time. From the experiment result shown in Fig. 6(b), \textsc{CrowdFL} with a small $\lambda_{sp}$ has less computation time than PPFDL.
\begin{figure}
\centering
\subfigure[Averaging model time]{
	\includegraphics[width=0.45\linewidth]{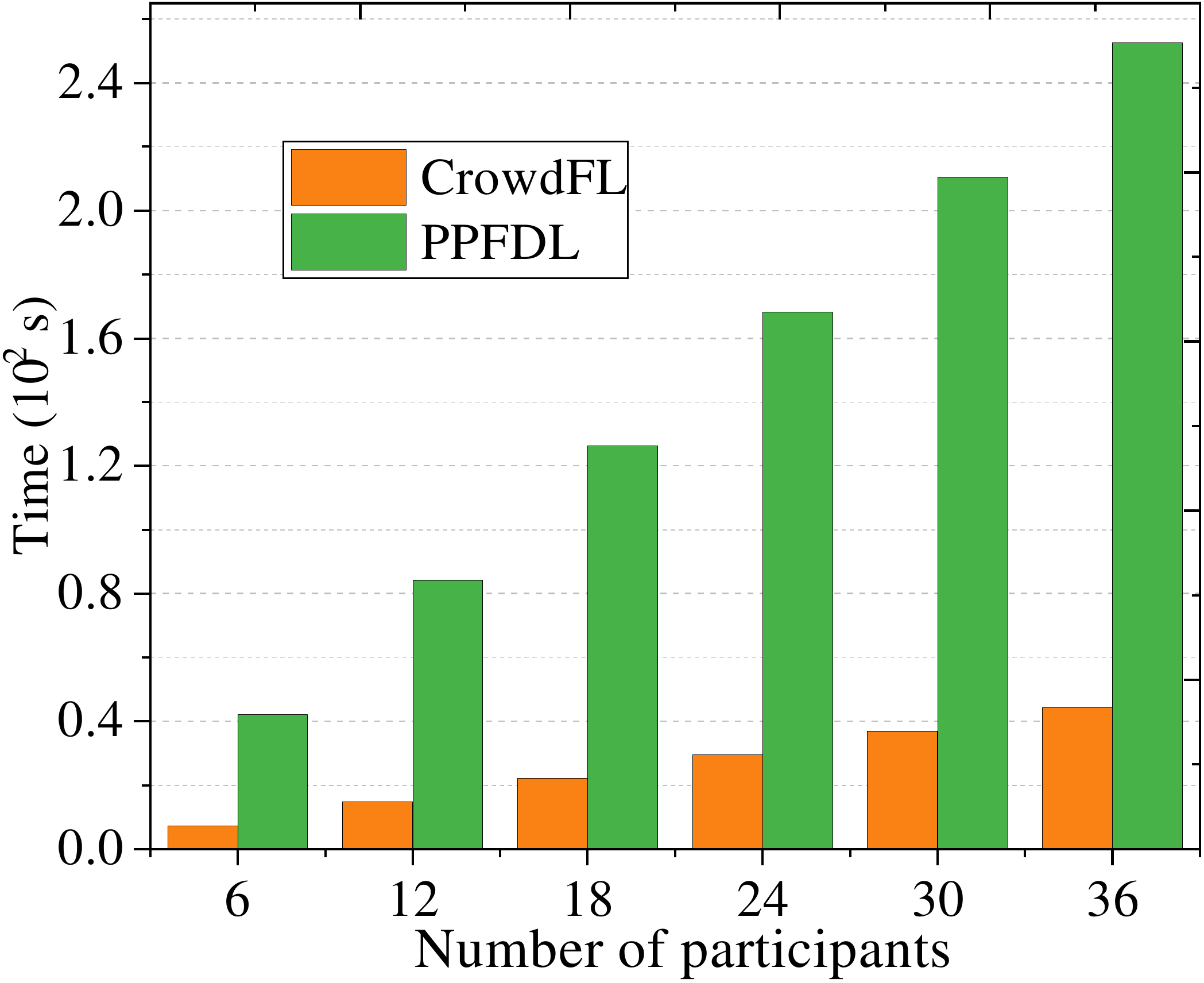}}
\hspace{0.05in}
\subfigure[Decrypting model time]{
	\includegraphics[width=0.453\linewidth]{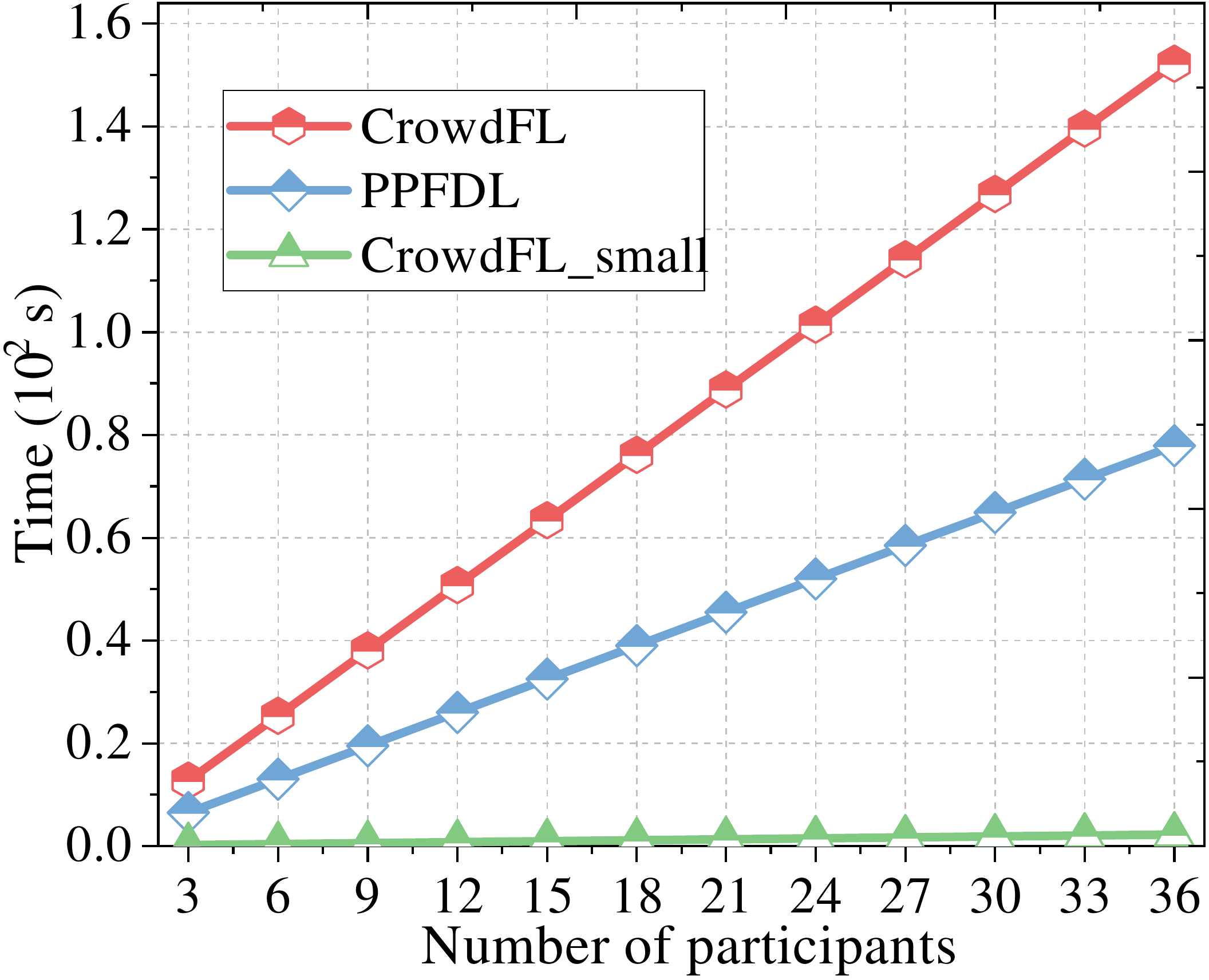}}
\caption{Comparisons of averaging model time and decrypting model time.}
\end{figure}

Fig. 7 shows the reward distribution and its running time. Specifically, ``Reward\_Rndi" means the reward distribution in $i$-th round model average, and ``Reward\_Sum" is total rewards in five rounds model aggregation. From Fig. 7(a), we can learn that if the participant can generate a more efficient training mode based on collected sensing data (i.e., the participant receives more rewards in one round model aggregation), rewards for the participant will increase round by round, whereas, rewards for the participants will reduce round by round. Fig. 7(b) gives the running time for reward distribution. Combining with Fig.6 (a) and Fig. 7(b), it can be seen that the running time of \texttt{PriFedAvg} and \texttt{PriRwd} approximately equals to that of the averaging model in PPFDL with one-time iteration. Thus, it can be concluded that \texttt{PriRwd} is efficient.
\begin{figure}
\centering
\subfigure[Reward distribution]{
	\includegraphics[width=0.45\linewidth]{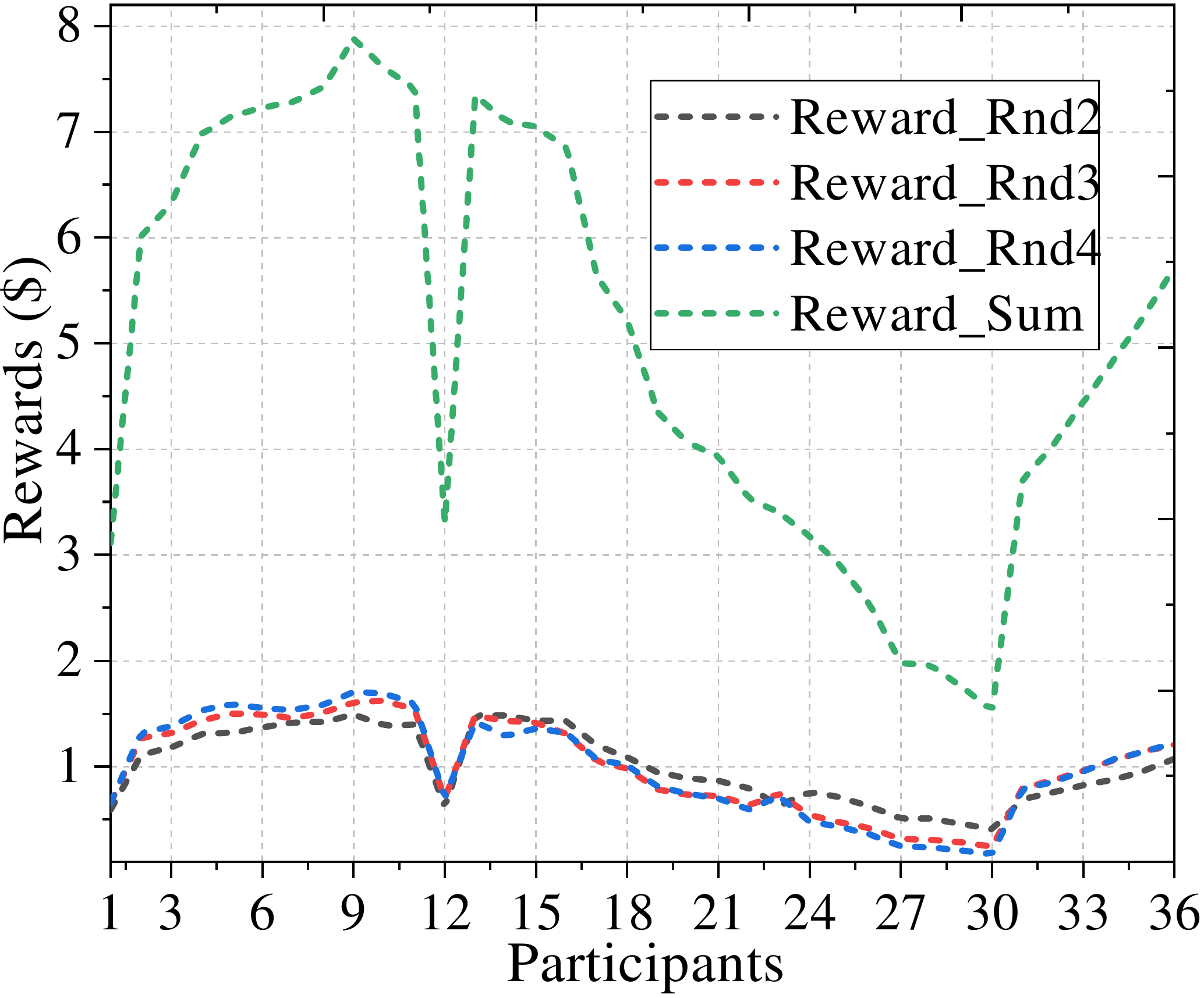}}
\hspace{0.05in}
\subfigure[Running time]{
	\includegraphics[width=0.455\linewidth]{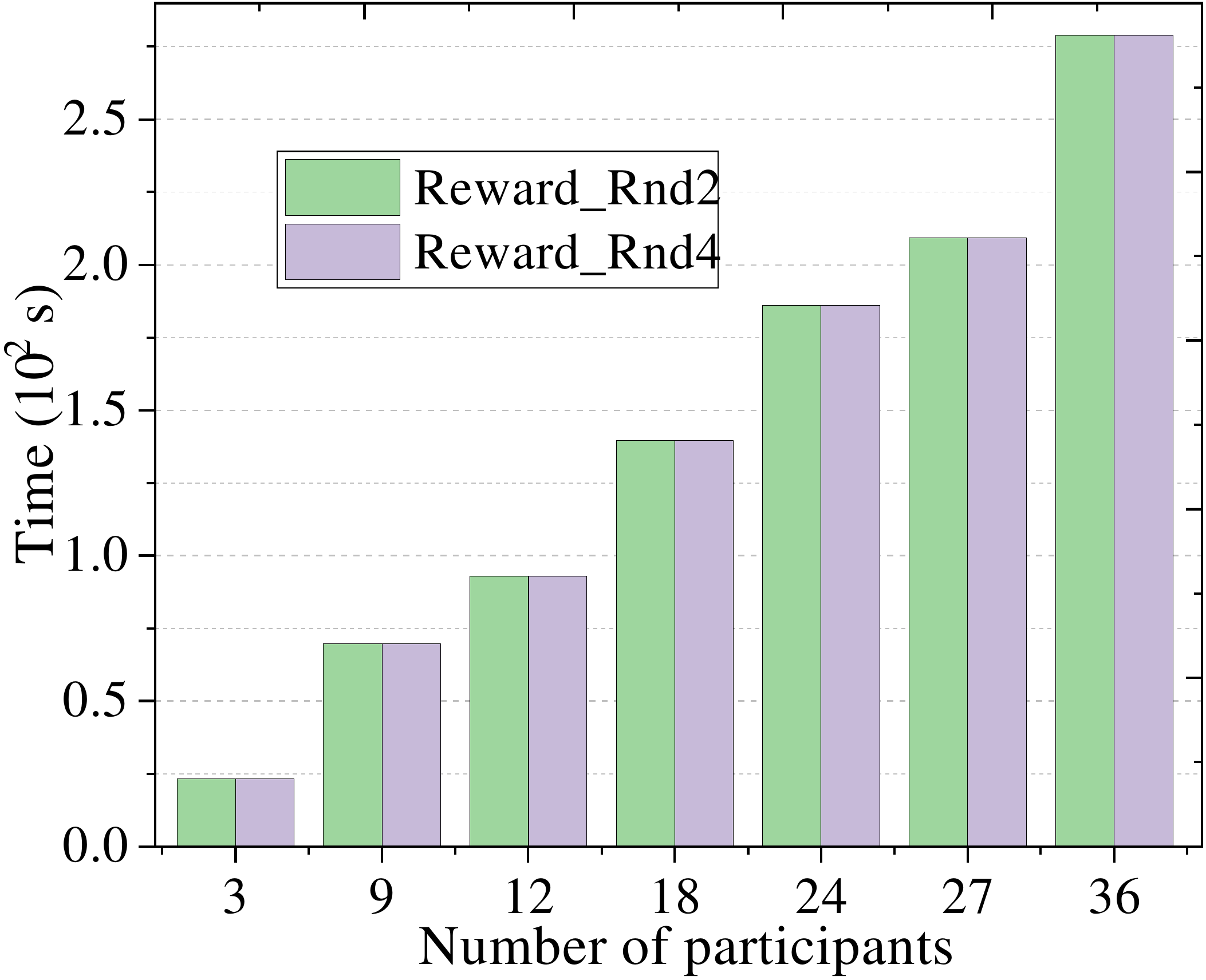}}
\caption{Reward distribution and its running time.}
\end{figure}

\section{Conclusion}
In this paper, we propose a privacy-preserving mobile crowdsensing (MCS) system based on federated learning, called \textsc{CrowdFL}. In \textsc{CrowdFL}, we outsource data collection and data processing to participants. Specifically, we propose a privacy-preserving federated averaging algorithm (\texttt{PriFedAvg}) to enable participants to locally process sensing data and protect training model privacy. Besides, we present discard and retransmission strategies to keep robustness against dropped participants and reduce the computation and communication overhead. Finally, to encourage participation, we design a privacy-preserving reward distribution mechanism (\texttt{PriRwd}) based encrypted training model. Experimental evaluations over a real-world dataset demonstrate that our proposed \textsc{CrowdFL} is feasible and efficient. In future work, we will focus on how to prevent participants from cheating in a privacy-preserving manner.

\ifCLASSOPTIONcompsoc
  \section*{Acknowledgments}
\else
  \section*{Acknowledgment}
\fi
This work was supported in part by the National Natural Science Foundation of China (Grant Nos. 62072109, U1804263, 61702105, and 61632013) and the Peng Cheng Laboratory Project of Guangdong Province PCL2018KP004.

\ifCLASSOPTIONcaptionsoff
  \newpage
\fi

\bibliographystyle{IEEEtran}
\bibliography{IEEEabrv,ref}

\end{document}